\definecolor{dkgreen}{rgb}{0,0.6,0}
\definecolor{gray}{rgb}{0.5,0.5,0.5}
\definecolor{mauve}{rgb}{0.58,0,0.82}
\newcommand{\mkcomment}[1]{\noindent {\bf \\MK : #1 \\}}
\newcommand{\mmf}{{\sc MMF}}
\newcommand{\mmfsl}{{\sc MMFsl}}
\newcommand{\mmfsf}{{\sc MMFsf}}
\newcommand{\pfsl}{{\sc FastPFsl}}
\newcommand{\pfsf}{{\sc FastPFsf}}
\newcommand{\fastpf}{{\sc FastPF}}
\newcommand{\static}{{\sc Static}}
\newcommand{\opt}{{\sc OptP}}
\newcommand{\robus}{{\sf ROBUS}}
\newtheorem{theorem}{Theorem}
\newtheorem{newdefn}{Definition}
\newtheorem{lemma}{Lemma}
\newcommand{\squishlist}{
 \begin{list}{$\bullet$}
  { \setlength{\itemsep}{0pt}
     \setlength{\parsep}{3pt}
     \setlength{\topsep}{3pt}
     \setlength{\partopsep}{0pt}
     \setlength{\leftmargin}{1.5em}
     \setlength{\labelwidth}{1em}
     \setlength{\labelsep}{0.5em} } }
\newcommand{\squishlisttwo}{
 \begin{list}{}
  { \setlength{\itemsep}{0pt}
    \setlength{\parsep}{0pt}
    \setlength{\topsep}{0pt}
    \setlength{\partopsep}{0pt}
    \setlength{\leftmargin}{1.5em}
    \setlength{\labelwidth}{1.5em}
    \setlength{\labelsep}{0.5em} } }
\newcommand{\squishend}{
  \end{list}  }
\begin{document}






\title{ROBUS: Fair Cache Allocation for Multi-tenant Data-parallel Workloads}

\numberofauthors{1}
\author{
\alignauthor
Mayuresh Kunjir, Brandon Fain, Kamesh Munagala, Shivnath Babu\\
	\affaddr{Duke University}\\
	\email{\{mayuresh, btfain, kamesh, shivnath\}@cs.duke.edu}
}


\maketitle

\begin{abstract}
Systems for processing big data---e.g., Hadoop, Spark, and massively parallel databases---need to run workloads on behalf of multiple tenants simultaneously. The abundant disk-based storage in these systems is usually complemented by a smaller, but much faster, {\em cache}. Cache is a precious resource: Tenants who get to use cache can see two orders of magnitude performance improvement. Cache is also a limited and hence shared resource: Unlike a resource like a CPU core which can be used by only one tenant at a time, a cached data item can be accessed by multiple tenants at the same time. Cache, therefore, has to be shared by a multi-tenancy-aware policy across tenants, each having a unique set of priorities and workload characteristics. 

In this paper, we develop cache allocation strategies that speed up the overall workload while being {\em fair} to each tenant. We build a novel fairness model targeted at the shared resource setting that incorporates not only the more standard concepts of Pareto-efficiency and sharing incentive, but also define envy freeness via the notion of {\em core} from cooperative game theory.  Our cache management platform, \robus, uses randomization over small time batches, and we develop a proportionally fair allocation mechanism that satisfies the core property in expectation. We show that this algorithm and related fair algorithms can be approximated to arbitrary precision in polynomial time. We evaluate these algorithms on a \robus\ prototype implemented on Spark with RDD store used as cache. Our evaluation on an industry-standard workload shows that our algorithms provide a speedup close to performance optimal algorithms while guaranteeing fairness across tenants.
\end{abstract}


\section{Introduction} 
\label{sec:intro}

Two recent trends in data processing are: (i) 
the use of multi-tenant clusters
for analyzing large and diverse datasets, and (ii)
the aggressive use of memory to speed up 
processing by caching datasets. The 
growing popularity of systems like Apache Spark~\cite{spark}, 
SAP HANA~\cite{hana},
Hadoop with Discardable Distributed Memory~\cite{ddm}, and
Tachyon~\cite{tachyon} highlight these trends. 

For example, Spark introduces an abstraction called
{\em Resilient Distributed Dataset (RDD)} to
represent any data relevant to modern analytics: 
files (on a local or distributed
file-system), tables (horizontally
or vertically partitioned), vertices or edges of graphs,
statistical models learned from data, etc.
 A user can create an RDD
directly from data residing on a local or distributed file-system,
or by applying a transformation to one or more other RDDs. 
The user can then direct the system to cache the
RDD in memory. Figure \ref{fig:spark-program} gives an example. 
Computations done on RDDs cached in memory
run 10-100x faster than when the data resides on disk~\cite{spark}.

\begin{figure}[h]
\centering
\includegraphics[width=0.9\columnwidth]{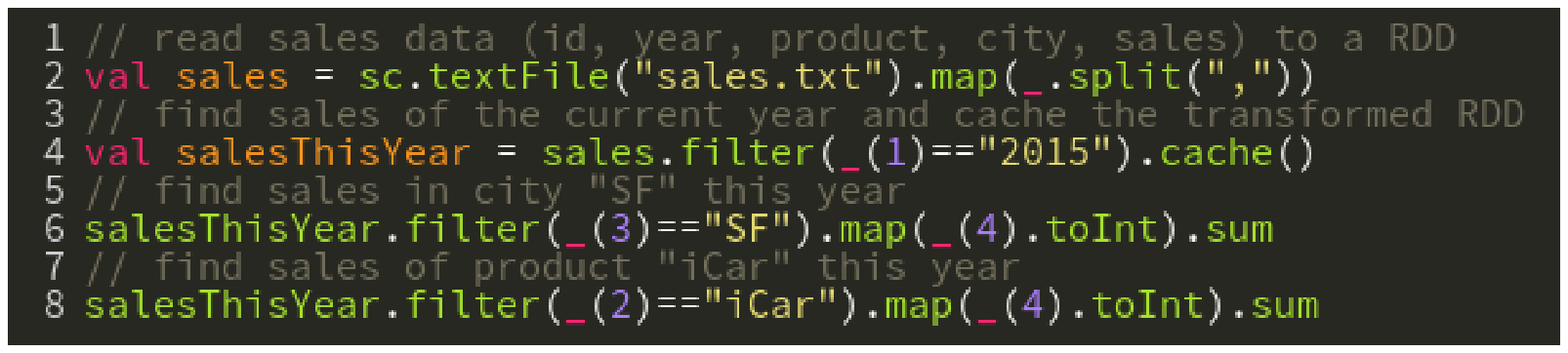}
\caption{A sample Spark program}
\label{fig:spark-program}
\end{figure}
\vspace*{-0.4cm}

{\bf Can User-Directed Caching and Multi-tenancy Coexist?}
User-directed caching brings some major challenges in a multi-tenant 
data analytics cluster:

\squishlist

\item {\em A precious resource:} 
Tenants who get to use the in-memory cache see many orders of magnitude
of performance improvement. However, the cache is also a limited
resource since the total size of memory in a cluster is
usually orders of magnitude smaller than the
data sizes stored and queried in the cluster.

\item {\em Complications from sharing:} Unlike a resource like a CPU core
which is used by one tenant at a time, a cached data item can 
simultaneously benefit a high-priority and a low-priority tenant.

\item {\em Avoiding cache hogs:} 
Low-priority tenants should not be able to hog the available cache,
and prevent other tenants from getting the performance benefits
they deserve.

\item {\em Utilities differ:} Different tenants have different
utilities for datasets that could be placed in the cache.

\squishend

\vspace{1mm}
\noindent When faced with such challenges, traditional cache
allocation policies can lead to
user dissatisfaction, poor or unpredictable
performance, and low resource utilization.
We will illustrate
the problems and opportunities through an example.
Consider a social-networking company, SpaceBook,
that runs a multi-tenant cluster for analyzing
datasets about how its users are using the service.

\noindent {\bf Multiple tenants:} 
The predominant practice in the industry is to group
similar users---e.g., users in the same department---into
queues (or, pools). Each queue forms a tenant in the cluster.
The cluster at SpaceBook is used by three tenants:
(i) {\em Analyst}, the business analysts in the company,
(ii) {\em Engineer}, the developers in the company who
develop data-driven applications such as recommendation models,
and (iii) {\em VP}, the top-level management in the company
such as the CEO and the Chief Security Officer
who look at hourly and daily reports.

\noindent {\bf Cacheable entities:}
These three tenants will benefit from caching one or more 
of three {\em views}---{\em R}, {\em S}, and {\em P}---each 
of size {\em M} bytes. 
Throughout this paper, ``view''  refers to any data item that can be cached to 
give a performance benefit. For SQL workloads, a view
corresponds to a SQL expression, like any candidate
view generated by a materialized view selection algorithm 
~\cite{gupta,yang,automated,dynamat}.
For broader data analytics---e.g., 
machine learning and graph processing---a view corresponds 
to a dataset on which the user has put a cache directive
(recall the example Spark program from Figure \ref{fig:spark-program}).

\begin{table}[h!]
\centering
{\small
 \begin{tabular}{|c ||c c c|} 
 \hline
 Tenant & $R$ & $S$ & $P$ \\ [0.5ex] 
 \hline\hline
 Analyst & 2 & 1 & 0 \\ 
 \hline
 Engineer & 2 & 1 & 0 \\
 \hline
 VP & 0 & 1 & 2 \\
 \hline
\end{tabular}
}
\caption{Utilities of cached views to tenants at SpaceBook}
\vspace*{-0.4cm}
\label{table-1-ref}
\end{table}

\noindent {\bf Utilities:}
The matrix in Table~\ref{table-1-ref} shows the {\em utility} that
each tenant gets if the corresponding view were to be cached in memory.
A simple definition of utility we will use in this paper
is the savings in I/O because data is
read from the in-memory cache versus disk. For example, if
view {\em R} is cached in memory, then tenant {\em Analyst} will get
a utility of two units. One common  pattern 
in multi-tenant clusters that we bring out in 
Table~\ref{table-1-ref} is that view {\em R} could be 
the detailed logs that business analysts and developers
access quite often; view {\em P} could be a table that only the top-level
management has access to; while view {\em S} could be a 
materialized view with aggregated information shared by all tenants.

\vspace{1mm}
\noindent {\bf Scenario 1:} Debbie, the cluster DBA, is 
responsible for allocating resources so that the tenants 
get good performance while all resources are used effectively. 
Suppose the in-memory cache at SpaceBook has a total size of {\em M} bytes.
Debbie first configures a static and equal partitioning of the cache so that 
each tenant is entitled to $\frac{M}{3}$ bytes of
cache memory. Recall that each of the views {\em R}, {\em S},
and {\em P} are {\em M} bytes each; so none of them
will fit in their $\frac{M}{3}$ bytes of cache.
Hence, resource utilization will be poor and none of the tenants
will receive any performance boost.

\vspace{1mm}
\noindent {\bf Scenario 2:}
Next, Debbie switches to a more common
cache allocation policy, {\em Least Recently
Used (LRU)}. The view {\em R} is used the most at Spacebook,
so it will likely remain cached for the most time. Thus, the Analyst
and Engineer tenants will see performance speedups. However,
the VP tenant's workload will see poor performance, causing these 
users---including Zuck, SpaceBook's CEO---to complain that important
reports needed for their decision-making are not being generated on time.

\vspace{1mm}
\noindent {\bf Scenario 3:}
Debbie decides to give the VP tenant 50\% higher priority
than the other tenants. So, she assigns
weights to the Analyst, Engineer, and VP tenants in the ratio
$1:1:1.5$. Debbie now switches to a policy that
allocates the cache based on the weighted utility of the tenants.
She tells Zuck that his reports will now be generated faster.
Unfortunately, Zuck will not see any
performance improvement: view {\em R} will
still be the only one cached since it has the highest weighted
utility of $4$ ($= 2 \times 1 + 2 \times 1$); higher than
view {\em S}'s weighted utility of $3.5$ ($= 1 \times 1 + 1
\times 1 + 1 \times 1.5$), and view {\em P}'s weighted utility
of $3$ ($= 2 \times 1.5$).

\vspace{1mm}
\noindent {\bf Scenario 4:}
To improve the poor performance seen by the VP tenant,
Zuck gives Debbie the money needed to double
the cache memory in the cluster.
Now two views will fit in the $2M$-sized cache.
However, even after this massive investment,
the VP tenant will only see a minor increase in performance
compared to the Analyst and Engineer tenants:
views {\em R} and {\em S} will now
be cached since they together have the highest weighted
utility of $7.5$ ($4$ for {\em R} + $3.5$ for {\em S}); higher than
$7$ for {\em R} and {\em P}, and $6.5$ for {\em S} and {\em P}.

\vspace{1mm}
\noindent {\bf Scenario 5:}
Zuck is very unhappy with Debbie. She now tries to
improve things by switching back to static partitioning
of $\frac{2M}{3}$ for every tenant, causing everyone 
to get poor performance because none of the views fit. In desperation, 
Debbie now has to go to the Analyst and Engineer tenants to
request them to stop adding cache directives to their workloads. 
The whole multi-tenant situation becomes extremely messy.

\vspace{1mm}
\noindent {\bf Better scenarios:}
Let us consider what Debbie would have wanted ideally.
An alternative in Scenario 3 is to 
cache view {\em S} instead of {\em R}. While {\em S}
has a slightly lower weighted utility of $3.5$ compared
to $4$ for $R$, all three tenants will see peformance improvements
from caching {\em S}. An alternative in Scenario 4 is to
cache {\em R} and {\em P} which will also 
give performance benefits to all three
tenants while only being slightly
lower in overall weighted utility than 
caching {\em R} and {\em S}. In particular, the {\em VP} tenant will 
now see major benefits from doubling the cache size.

The above example shows non-trivial nature of the problem of cache allocation in multi-tenant setups. There is a need to make principled choices when it comes to picking data items to cache. This motivates the main challenge we address.

\begin{quote}
\em Develop a cache allocation policy that provides near-optimal performance speedups for tenants' workload while simultaneously achieving near-optimal fairness in terms of the tenants' performance.
\end{quote}

\subsection*{Our Contributions.} 
\begin{itemize}
 \item In Section~\ref{sec:archi}, we propose \robus, a platform to optimize multi-tenant query workloads in an {\em online} manner using cache for speedup. This framework groups queries in small time-based batches and employs a randomized cache allocation policy on each batch.
 
 \item In Section~\ref{sec:model}, we consider the abstract setting of
   shared resource allocation within a batch, and enumerate properties
   that we desire from any allocation scheme. We show that the notion of {\em core} from
   cooperative game theory captures the fairness properties
   in a succinct fashion. We show that when restricted to randomized
   allocation policies within a batch, a simple
   algorithm termed {\em proportional fairness} generates an
   allocation which satisfies fairness properties in expectation
   for that batch.
 
 \item The policies we construct are based on convex programming formulations of exponential size. Nevertheless, in Section~\ref{sec:algo}, we show that these policies admit to arbitrarily good approximations in polynomial time using the multiplicative weight update method. We present implementations of two fair policies: max-min fairness and proportional fairness. We also present faster and more practical heuristics for computing these solutions.
 
 \item We show a proof-of-concept implementation of \robus\ on a multi-tenant Spark cluster. Motivated by practical use cases, we develop a synthetic workload generator to create various scenarios. Implementation details and evaluation are provided in Section~\ref{sec:eval}. Results show that our policies provide desirable throughput and fairness in a comprehensive set of setups.

\item Finally, our policies are specified abstractly, and as such easily extend to other resource allocation settings. We discuss this in Section~\ref{sec:discuss-fair}. 
\end{itemize}

\section{ROBUS Platform} 
\label{sec:archi}
\robus\ (Random Optimized Batch Utility Sharing), shown 
in Figure~\ref{fig:architecture}, 
is the  cache management platform we have developed 
for multi-tenant data-parallel workloads. 
\robus\ is designed to be easily pluggable in systems like
Hadoop and Spark. Each tenant submits
its workload in an online fashion 
to a designated queue which is characterized by a weight indicating the
tenant's {\em fair} share of system resources. (Recall
our example from Section \ref{sec:intro}.)

\begin{figure}[h]
\centering
\includegraphics[width=0.8\columnwidth]{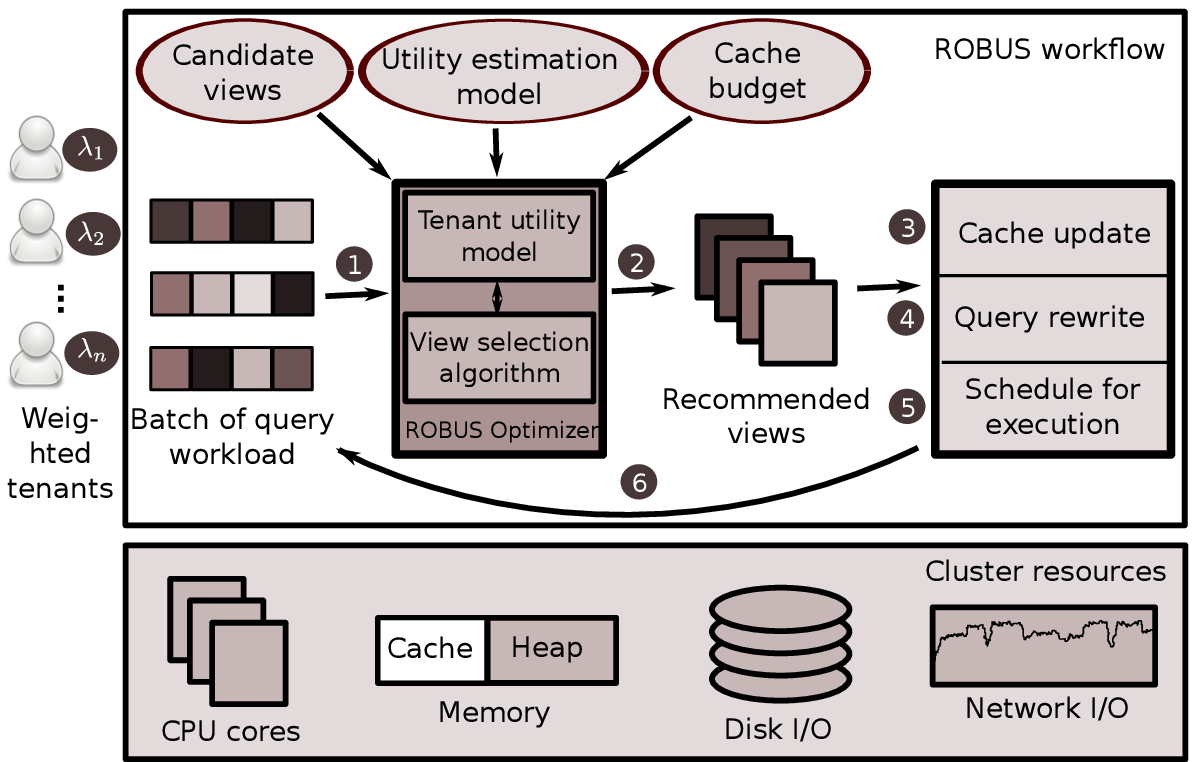}
\caption{\robus\ platform}
\label{fig:architecture}
\end{figure}


As illustrated in Figure~\ref{fig:architecture}, 
\robus\ processes the workload in {\em batches} 
by running five steps in a repeated loop. 
Step 1 removes a batch of queries 
that were submitted in a fixed time interval
into the tenants' queues. 
Step 2 runs an algorithm over this entire batch to select a set of
views to cache. This computation simultaneously optimizes performance and 
fairness; designing this algorithm is the main focus of this paper. 

Step 2 takes three inputs: (i) a set of candidate views
for the query batch, (ii) a utility estimation model for cached views, 
and (iii) the total cache budget (i.e., memory available for caching). 
The candidate view generation in \robus\ is a pluggable module. 
By default, the candidate views for a SQL query are the 
base tables accessed by the query. 
For workloads like machine learning and graph processing, 
the candidate views are datasets on which the user has put 
a cache directive (recall the example Spark program from 
Figure \ref{fig:spark-program}). 

Any candidate view selection algorithm
from the literature can be plugged in to 
\robus~\cite{gupta,yang,automated,dynamat}. To  support 
this feature, \robus\ has a pluggable Step 4 where a 
query can be rewritten to use the views selected for caching 
in Step 2 before being run in Step 5. We make use of 
\robus's pluggability in Section~\ref{sec:eval} to run 
a candidate view selection algorithm that considers 
different vertical projections of input tables in the workload.
In future, we plan to extend Step 4 to support re-optimization
of the query based on the cached views. Re-optimization
may change the query plan entirely. 

The utility estimation model is used in the view selection procedure
to estimate the utility provided to a query by any cached view. 
\robus\ currently models these utilities as savings in disk I/O costs
if the view were to be read off of in-memory cache versus disk. 
This approach keeps the models simple and widely applicable. In future,
we plan to incorporate richer utility models from the literature 
that can account for more complex candidate view selection algorithms
that consider interactions among views (e.g., use of views 
can completely change the query plan for a query ~\cite{miso}). Total utility of a cache configuration to a tenant is computed by summing up estimated utilities of the queries submitted by the tenant. The tenant utilities thus computed are used by view selection algorithm to recommend optimal set of views to cache.

In Step 3, the cache is updated with the views 
selected by our algorithm (if they are not already in the cache).
The query batch is then run via Steps 4 and 5. 
Every query runs as data-parallel tasks
in a system like Hadoop or Spark. Our current prototype of \robus\
runs on Spark. 
A task scheduler (e.g.,~\cite{hadoop-fair, drf}) is responsible for 
allocating system resources to the tasks.
Cluster memory is divided into two parts: a heap space for 
run-time objects and a cache for 
the selected views. While the heap is divided across tasks and is allocated by the task scheduler, the cache is shared by all the queries 
in the batch simultaneously and managed by \robus.

\section{Fairness Properties and Policies for Single Batch}
\label{sec:model}
In this section we study various notions of fairness when restricted to view selection for queries from a single batch. We consider policies that compute allocations that simultaneously provide large utility to many tenants, and enforce a rigorous notion of fairness between the tenants. Since this is very related to other resource allocation problems in economics~\cite{budish,abdulkadiroglu,BMoulin}, we draw heavily on that work for inspiration. However, the key difference from standard resource allocation problems is that in our setting, the resources (or views) are simultaneously shared by tenants. In contrast, the resource allocation settings in economics have typically considered {\em partitioning} resources between tenants. As we shall see below, this leads to interesting differences in the notions of fairness.

\subsection{Fairness and Randomization}
It is well-known in economics~\cite{random} that the combination of fairness and indivisible resources (in our case, the cache and views) necessitates randomization. To develop intuition, we present two examples. 

First consider a simple fair allocation scheme that for $N$ tenants simply allows each tenant to use $\frac{1}{N}$ of the total cache for her preferred view(s).  It is plausible that some tenants prefer a large view that does not fit in this partition but does fit in the cache.  Therefore, letting tenants have $\frac{1}{N}$ probability of using the whole cache can have arbitrarily larger expected utility than the scheme which with probability 1 lets them use $\frac{1}{N}$ fraction of the whole cache. 

Next,  consider a batch wherein two tenants each request a different large view such that only one can fit into the cache. In this case, there can be no deterministic allocation scheme that does not ignore one of the tenants.  Using randomization, we can easily ensure that each tenant has the same utility in expectation.  In fact, utility in expectation will be the per batch guarantee we seek, which over the long time horizon of a workload will lead to deterministic fairness.

\paragraph{Notation for Single Batch}
Since our view selection policy works on individual batches at a time, the notation and discussion below is specific to queries within a batch. Let $N$ denote the total number of tenants. Define:
\begin{newdefn}
A {\em configuration} $S$ is the set of views feasible in that the sum of the view sizes $\sum_{S_i \in S} |S_i|$ is at most the cache size.
\end{newdefn}
$U_i(S)$ denotes the utility to tenant $i$ that would result from caching $S$, which is defined as the sum over all queries in $i$'s queue of the utility for that query. 
 
\robus\ generates a set $Q$ of configurations which by definition can fit in the cache, and assigns a probability $x_S$ to cache each configuration $S \in Q$. Define the vector of all such probabilities as:
\begin{newdefn}
An {\em Allocation} $\mathbf{x}$ is the vector corresponding to probabilities $x_s$ of choosing configuration $S$ normalized so $\|\mathbf{x}\| = \sum_{S \in Q} x_S = 1$.
\end{newdefn}
We denote $U_i(\mathbf{x}) = \sum_{S \in Q} x_S U_i(S)$ as the expected utility of tenant $i$ in allocation $\mathbf{x}$. \robus\ implements allocation $\mathbf{x}$ by sampling a configuration from the probability distribution. 

For each tenant $i$, let $U^*_i = \max_{S} U_i(S)$ denote the maximum possible utility tenant $i$ can obtain if it were the only tenant in the system. For allocation $\mathbf{x}$, we define the {\em scaled utility} of $i$ as $V_i(\mathbf{x}) = \frac{U_i(\mathbf{x})}{U^*_i}$. We will use this concept crucially in defining our fairness notions.

\subsection{Basic Fairness Desiderata}
The first question to ask when designing a fair allocation algorithm is what properties define fairness.  There has been much recent work in economics and computer science on heterogeneous resource allocation problems and we begin by considering the properties that this related work examines \cite{drf,beyonddrf,noagentleftbehind}.  Note that because we work within a randomized model, all of these properties are framed in terms of expected utility of tenants.

\begin{itemize}
    \item \textbf{Pareto Efficiency (PE):} An allocation is Pareto-efficient if no other allocation simultaneously improves the expected utility of at least one tenant and does not decrease the expected utility of any tenant. 
    \item \textbf{Sharing Incentive (SI):} This property is termed {\em individual rationality} in Economics. For $N$ tenants, each tenant should expect higher utility in the shared allocation setting than she would expect from simply always having access to $\frac{1}{N}$ of the resources. Since our allocations are randomized, allocation $\mathbf{x}$ satisfies SI if for all allocations $\mathbf{y}$ with $||\mathbf{y}|| \le \frac{1}{N}$ and for tenants $i$, $U_i(\mathbf{x}) \ge U_i(\mathbf{y})$. In other words, $V_i(\mathbf{x}) \ge \frac{1}{N}$ for all tenants $i$, where $V_i(\mathbf{x})$ is the scaled utility function defined above.
\end{itemize}

One property that is widely studied in other resource allocation contexts is strategy-proofness on the part of the tenants (the notion that no tenant should benefit from lying). In our case, since the queries are seen by the query optimizer, strategy-proofness is not an issue.  The above desiderata also omit envy-freeness (that no tenant should prefer the allocation to another tenant) which is something we revisit later.

We now consider a progression of view selection mechanisms on a single batch from very simple to more sophisticated.  As a running example, suppose  there is a cache of capacity $1$. There are three views $R$, $S$, or $P$ that are demanded by $N$ tenants. Each view has unit size, so that we can cache only one view any time.  Note that this is a drastically simplified example setup only intended to build intuition about why certain view selection algorithms might fail or are superior to others; our results and experiments do not only have unit views, are not limited to three tenants, and may have arbitrarily complex utilities compared to these examples.

 We can summarize the input information our view selection might see in a given batch in a table (e.g., Table~\ref{table 1}) where the numbers represent utilities tenants get from the views. 
An allocation here is a vector $\mathbf{x}$ of three dimensions and $\|\mathbf{x}\|=1$ that gives the probabilities in our randomized framework $x_{R}, x_{S}, x_{P}$ for selecting the views.

\begin{table}[h!]
\centering
 \begin{tabular}{|c ||c c c|} 
 \hline
 Tenant & R & S & P \\ [0.5ex] 
 \hline\hline
 A & 1 & 0 & 0 \\ 
 \hline
 B & 0 & 1 & 0 \\
 \hline
 C & 0 & 0 & 1 \\ [1ex] 
 \hline
\end{tabular}
\caption{Every tenant gets utility from a different view}

\label{table 1}
\end{table}

\paragraph{Static Partitioning}
Recall that static partitioning is the algorithm that simply deterministically allows each of the $N$ tenants to use $\frac{1}{N}$ of the shared resource.  This algorithm does not take advantage of randomization.  For the example in Table~\ref{table 1}, this algorithm cannot cache anything because each user only gets to decide on the use of $\frac{1}{3}$ of the  cache.  The algorithm is sharing incentive in the standard deterministic setting, but is trivially not Pareto efficient and is not sharing incentive in expectation either. As mentioned previously, such examples motivate the randomization framework to start with.     

\paragraph{Random Serial Dictatorship} A natural progression from static partitioning is to consider random serial dictatorship (RSD), a mechanism that is widely considered~\cite{BMoulin,abdulkadiroglu} for problems such as house allocation and school choice. We order the tenants in a random permutation. Each tenant sequentially computes the best set of views to cache (in the residual cache space) to maximize its own utility.  In the example in Table~\ref{table 1}, each tenant gets a $\frac{1}{3}$ chance of picking her preferred resource (since in a random permutation each tenant has a $\frac{1}{3}$ chance of appearing first) so the allocation is $\mathbf{x}=<x_{R}=\frac{1}{3}, x_{S}=\frac{1}{3}, x_{P}=\frac{1}{3}>$, where each tenant has the same utility in expectation. In fact, it is easy to prove that RSD is always SI: Each tenant has $\frac{1}{N}$ chance of being first in the random ordering, so its scaled utility is at least $\frac{1}{N}$.

However, in contrast with resource partitioning problems, our problem has a shared aspect that RSD fails to capture.  For example, consider the situation in Table~\ref{table 2}; RSD computes the same allocation as in the example in Table: \ref{table 1}, $\mathbf{x}=<x_{R}=\frac{1}{3}, x_{S}=\frac{1}{3}, x_{P}=\frac{1}{3}>$. However, on this example, though RSD is SI, it is not Pareto-efficient (PE).  Tenants A and C have expected utility of 1 (a $\frac{1}{3}$ chance of getting $2$ if they come first in the permutation and a $\frac{1}{3}$ chance of getting $1$ if B does) and tenant B has expected utility of $\frac{1}{3}$ with this allocation.  However, if we used allocation $\mathbf{x}=<x_{R}=0, x_{S}=1, x_{P}=0>$ then tenants A, B, and C all have utility 1, which is strictly better for tenant B and as good for tenants A and C.  RSD fails to capture the fact that while each tenant may have different top preferences, many tenants may share secondary preferences. 

\begin{table}[h!]
\centering
 \begin{tabular}{|c ||c c c|} 
 \hline
 Tenant & R & S & P \\ [0.5ex] 
 \hline\hline
 A & 2 & 1 & 0 \\ 
 \hline
 B & 0 & 1 & 0 \\
 \hline
 C & 0 & 1 & 2 \\ [1ex] 
 \hline
\end{tabular}
\caption{Every tenant gets utility from the same view}

\label{table 2}
\end{table}

\paragraph{Utility Maximization Mechanism}
We next consider the mechanism which simply maximizes the total expected utility of an allocation, {\em i.e.}, $\mbox{arg\,max}_{\mathbf{x}} \sum_i U_i(\mathbf{x})$.  It is easy to check that this mechanism can ignore tenants who do not contribute enough to the overall utility.  In other words, it cannot be SI.  

\paragraph{Max-min Fairness (MMF)}
In this algorithm we combine previous insights to optimize performance subject to fairness constraints to get a mechanism that is both SI and PE.  For allocation $\mathbf{x}$, let $\mathbf{v}(\mathbf{x}) = (V_1(\mathbf{x}), V_2(\mathbf{x}), \ldots, V_N(\mathbf{x}))$ denote the vector of scaled utilities of the tenants. We choose an allocation $\mathbf{x}$ so that the vector $\mathbf{v}(\mathbf{x})$ is lexicographically max-min fair. This means the smallest value in $\mathbf{v}(\mathbf{x})$ is as large as possible; subject to this, the next smallest value is as large as possible, and so on. We present algorithms to compute these allocations in Section~\ref{sec:algo}.

\begin{theorem}
The MMF mechanism is both PE and SI.
\end{theorem}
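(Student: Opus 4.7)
The plan is to treat the two desiderata separately, with both relying on the lexicographic optimality built into MMF.

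\textbf{Sharing Incentive.} I would exhibit an explicit witness allocation that already guarantees $1/N$ scaled utility to every tenant, and then invoke lex-maximality of MMF. For each $i$, let $S_i^{\star}$ be a configuration realizing $U_i(S_i^{\star}) = U_i^{\star}$, and let $\mathbf{y}$ be the allocation placing probability $1/N$ on each $S_i^{\star}$ (combining weights if some coincide, so $\|\mathbf{y}\| = 1$). Then $U_i(\mathbf{y}) \geq (1/N)\, U_i(S_i^{\star}) = U_i^{\star}/N$, hence $V_i(\mathbf{y}) \geq 1/N$ for every $i$, i.e.\ the minimum entry of $\mathbf{v}(\mathbf{y})$ is at least $1/N$. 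Because the MMF allocation $\mathbf{x}$ lexicographically maximizes the sorted scaled-utility vector, its smallest component is at least as large as that of $\mathbf{y}$, which immediately yields $V_i(\mathbf{x}) \geq 1/N$ for every tenant.

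\textbf{Pareto Efficiency.} I would argue by contradiction. Suppose some allocation $\mathbf{y}$ Pareto-dominates $\mathbf{x}$, meaning $U_i(\mathbf{y}) \geq U_i(\mathbf{x})$ for all $i$ with strict inequality for some $j$. Dividing each coordinate by the positive constant $U_i^{\star}$ preserves the inequalities, so $V_i(\mathbf{y}) \geq V_i(\mathbf{x})$ coordinatewise, strict at $j$. The goal is to translate this coordinatewise dominance into lex dominance of the sorted vectors $\tilde{\mathbf{v}}(\mathbf{y})$ over $\tilde{\mathbf{v}}(\mathbf{x})$, which contradicts lex-maximality of $\mathbf{x}$. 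For this I would appeal to the standard order-statistic monotonicity fact: if $a_i \leq b_i$ for every $i$, then the ascending sorts satisfy $\tilde a_k \leq \tilde b_k$ for every $k$, which follows from the representation $\tilde a_k = \min_{|T|=k} \max_{i\in T} a_i$. Since sorting preserves sums, $\sum_k \tilde a_k = \sum_i a_i$, so the strict inequality at coordinate $j$ forces $\sum_k \tilde{V}_k(\mathbf{y}) > \sum_k \tilde{V}_k(\mathbf{x})$, and therefore at least one $k$ with $\tilde V_k(\mathbf{y}) > \tilde V_k(\mathbf{x})$. Taking the smallest such $k$, combined with the coordinatewise inequality for all indices below it, shows $\tilde{\mathbf{v}}(\mathbf{y})$ is lexicographically greater than $\tilde{\mathbf{v}}(\mathbf{x})$, the desired contradiction.

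The main obstacle is the PE argument, specifically making rigorous the passage from coordinatewise dominance of the unsorted scaled utilities to strict lex dominance of the sorted vectors. Once the order-statistic monotonicity lemma and the sum-preservation observation are in place, both properties drop out cleanly from the lexicographic definition of MMF; SI is essentially immediate from the uniform-over-best-configurations witness.
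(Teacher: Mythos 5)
Your proof is correct and follows essentially the same route as the paper: exhibit a witness allocation guaranteeing every tenant scaled utility at least $1/N$ and transfer this to MMF via lex-maximality for SI, and derive a lexicographic improvement from any Pareto improvement to contradict lex-maximality for PE. The only differences are cosmetic --- the paper uses the RSD allocation as its $1/N$ witness where you use the uniform mixture over per-tenant optimal configurations, and you spell out the order-statistic monotonicity step that the paper leaves as a one-sentence assertion.
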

\begin{proof}
The RSD mechanism guarantees scaled utility of at least $\frac{1}{N}$ to each tenant. Since the MMF allocation is lexicographically max-min, the minimum scaled utility it obtains is at least the minimum scaled utility in RSD, which is at least $\frac{1}{N}$. To show PE, note that if there were an allocation that yielded at least as large utility for all tenants, and strictly higher utility for one tenant, the new allocation would be lexicographically larger, contradicting the definition of MMF.
\end{proof}


\begin{table}[h!]
\centering
 \begin{tabular}{|c ||c c|} 
 \hline
 Tenant & R & S \\ [0.5ex] 
 \hline\hline
 $T_1$ & 1 & 0 \\ 
 \hline
 $T_2$ & 1 & 0 \\
 \hline
 ... & ... & ... \\  
 \hline
 $T_N$ & 0 & 1 \\ [1ex]
 \hline
\end{tabular}
\caption{All tenants except one get utility from the same view}

\label{table 4}
\end{table}

Consider the example in Table~\ref{table 4}. It is easy to see that the MMF value is $\frac{1}{2}$ and can be achieved with an allocation of $<x_{R}=\frac{1}{2}, x_{S}=\frac{1}{2}>$.  This allocation is both SI and PE.  

\subsection{Envy-freeness and the Core}
The above discussion omits one important facet of fairness. A fair allocation has to be {\em envy free}, meaning no tenant has to envy how the allocation treats another tenant. In the case where resources are partitioned between tenants, such a notion is easy to define: No tenant must derive higher utility from the allocation to another tenant. However, in our setting, resources (views) are shared between tenants, and the only common factor is the cache space. In any allocation $\mathbf{x}$, each tenant derives utility  from certain views, and we can term the expected size of these views as the {\em cache share} of this user.

One could try to define envy-freeness in terms of cache space as follows: No tenant should be able to improve expected utility by obtaining the cache share of another tenant. But this simply means all tenants have the same expected cache share. Such an allocation need not be Sharing Incentive. Consider the example in Table~\ref{table 5}, where each view $R$ and $S$ has size $1$ and the cache has size $1$. The only allocation that equalizes cache share caches $S$ entirely. But this is not SI for tenant $B$.

\begin{table}[h!]
\centering
 \begin{tabular}{|c ||c c|} 
 \hline
 Tenant & R & S  \\ [0.5ex] 
 \hline\hline
 $A$ & 0 & 1 \\ 
 \hline
 $B$ & 100 & 1  \\
 \hline
\end{tabular}
\caption{Counterexample for perfect Envy-freeness}

\label{table 5}
\end{table}

This motivates taking the utility of tenants into account in defining envy. However, this quickly gets tricky, since the utility can be a complex function of the entire configuration, and not of individual views. In order to develop more insight, we use an analogy to public projects. The tenants are members of a society, who contribute equal amount of tax. The total tax is the cache space. Each view is a public project whose cost is equal to its size. Users derive utility from the subset of projects built (or views cached). In a societal context, users are envious if they perceive an inordinate fraction of tax dollars being spent on making a small number of users happy. In other words, if they perceive a {\em bridge to nowhere} being built. Let us revisit the example in Table~\ref{table 4}. Here, the MMF allocation sets $\mathbf{x}=<x_{R}=\frac{1}{2}, x_{S}=\frac{1}{2}>$ and ignores the fact that an arbitrarily large number of tenants want R, compared to just one tenant who wants S. If we treat $R$ as a school and $S$ as a park, an arbitrarily large number of users want a school compared to a park, yet half the money is spent on the school, and half on the park. This will be perceived as unfair on a societal level.  

\paragraph{Randomized Core}
In order to formalize this intuition, we borrow the notion of core from cooperative game theory and exchange market economics \cite{gillies,core,continuum,exchange}. We treat each user as bringing a {\em rate endowment} of $\frac{1}{N}$ to the system. If they were the only user in the system, we would produce an allocation $\mathbf{x}$ with $||\mathbf{x}|| = \frac{1}{N}$ and maximize their utility. An allocation $\mathbf{x}$ over all tenants lies in the {\em core} if no subset of tenants can deviate and obtain better utilities for all participants by pooling together their rate endowments. More formally,

\begin{newdefn}
An allocation $\mathbf{x}$ is said to lie in the (randomized) \textbf{core} if for any subset $T$ of $N$ tenants, there is no feasible allocation $\mathbf{y}$ such that $||\mathbf{y}|| = \frac{|T|}{N}$, for which $U_i(\mathbf{y}) \geq U_i(\mathbf{x}), \forall i \in T$ and $U_j(\mathbf{y}) > U_j(\mathbf{x})$ for at least one $j \in T$.
\end{newdefn}

It is easy to check that any allocation in the core is both SI and PE, by considering sets $T$ of size $1$ and $N$ respectively. In the above example (Table~\ref{table 4}), the allocation $\mathbf{x}=<x_{R}=\frac{N-1}{N}, x_{S}=\frac{1}{N}>$ lies in the core. Tenant $T_N$ gets its SI amount of utility and cache space. The more demanded view $R$ is cached by a proportionally larger amount. In societal terms, each user perceives his tax dollars as being spent fairly.  Similarly, in the example in Table~\ref{table 5}, the allocation $\mathbf{x}=<x_{R}=\frac{1}{2}, x_{S}=\frac{1}{2}>$ lies in the core. 


In the context of provisioning public goods, there are two solution concepts that are known to lie in the core: The first, termed a Lindahl equilibrium~\cite{foley,mas-colell} attempts to find per-tenant prices that implement a Walrasian equilibrium, while the second, termed ratio equilibrium~\cite{kaneko} attempts to find per-tenant ratios of cache-shares. However, these concepts are shown to exist using fixed-point theorems, which don't lend themselves to efficient algorithmic implementations. We sidestep this difficulty by using randomization to our advantage, and show that a simple mechanism finds an allocation in the core.


\paragraph{Proportional Fairness}
\begin{newdefn}
An allocation $\mathbf{x}$ is \textbf{proportionally fair} (PF)  if it is a solution to:
\begin{equation}
\mbox{Maximize} \sum_{i=1}^{N} \log(U_i(\mathbf{x})) \mbox{ subject to: } ||\mathbf{x}|| \le 1
\label{equation 1}
\end{equation}
\label{definition 1}
\end{newdefn} 

We show the following theorem using the KKT (Karush-Kuhn-Tucker) conditions~\cite{kkt}. The proof also follows easily from the classic first order optimality condition of PF~\cite{bargaining}; however, we present the entire proof for completeness. Subsequently, in Section~\ref{sec:algo}, we show how to compute this allocation efficiently.

\begin{theorem}
\label{thm:pf}
Proportionally fair allocations satisfy the core property.
\end{theorem}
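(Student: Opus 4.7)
My plan is to prove Theorem~\ref{thm:pf} via the KKT conditions of the convex program in Definition~\ref{definition 1}, extracting a single pointwise inequality that then rules out any blocking coalition by a simple averaging argument.

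First, I would set up the KKT conditions. The objective $\sum_i \log U_i(\mathbf{x})$ is concave in $\mathbf{x}$ (since each $U_i(\mathbf{x})$ is a nonnegative linear function of $\mathbf{x}$ and $\log$ is concave and monotone), and the constraint $\|\mathbf{x}\|\le 1$ together with $x_S \geq 0$ is affine, so KKT is both necessary and sufficient at an optimum $\mathbf{x}^*$. Introducing a multiplier $\lambda \geq 0$ for the budget constraint and $\mu_S \geq 0$ for the nonnegativity constraints, stationarity in coordinate $x_S$ reads
\[
\sum_{i=1}^{N} \frac{U_i(S)}{U_i(\mathbf{x}^*)} \;=\; \lambda - \mu_S,
\]
so for every configuration $S$ we have $\sum_{i} U_i(S)/U_i(\mathbf{x}^*) \leq \lambda$, with equality whenever $x_S^* > 0$. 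To pin down $\lambda$, I would multiply by $x_S^*$, sum over $S$, and swap the order of summation: the left side telescopes to $\sum_i U_i(\mathbf{x}^*)/U_i(\mathbf{x}^*) = N$, while the right side equals $\lambda \|\mathbf{x}^*\| = \lambda$ (the budget is tight, since increasing any nonzero coordinate strictly improves the objective). Hence $\lambda = N$ and the key inequality is
\[
\sum_{i=1}^{N} \frac{U_i(S)}{U_i(\mathbf{x}^*)} \;\leq\; N \qquad \text{for every configuration } S.
\]

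Next, I would argue the core property by contradiction. Suppose some subset $T$ and allocation $\mathbf{y}$ with $\|\mathbf{y}\| = |T|/N$ satisfy $U_i(\mathbf{y}) \geq U_i(\mathbf{x}^*)$ for all $i \in T$, with strict inequality for some $j \in T$. Summing the ratio $U_i(\mathbf{y})/U_i(\mathbf{x}^*)$ over $T$ gives a value strictly greater than $|T|$. On the other hand, using linearity of $U_i$ in $\mathbf{y}$ and then applying the KKT bound to each configuration in $\mathbf{y}$'s support,
\[
\sum_{i \in T} \frac{U_i(\mathbf{y})}{U_i(\mathbf{x}^*)} = \sum_{S} y_S \sum_{i \in T} \frac{U_i(S)}{U_i(\mathbf{x}^*)} \leq \sum_{S} y_S \sum_{i=1}^{N} \frac{U_i(S)}{U_i(\mathbf{x}^*)} \leq N \, \|\mathbf{y}\| = |T|,
\]
a contradiction. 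Hence no such blocking coalition can exist and $\mathbf{x}^*$ lies in the core.

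The main obstacle I anticipate is the regularity issue when $U_i(\mathbf{x}^*) = 0$ for some tenant, which would make the logarithm and the KKT ratios undefined. I would handle this by either assuming without loss of generality that each tenant has at least one configuration giving positive utility (otherwise that tenant's utility is zero in every allocation and can be dropped from both sides), or by a standard perturbation argument ($\log(U_i + \varepsilon)$, take $\varepsilon \to 0$). A minor secondary step is verifying $\lambda > 0$, which holds as soon as some $U_i(S) > 0$, again WLOG.
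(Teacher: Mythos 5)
Your proposal is correct and follows essentially the same route as the paper's proof: the same KKT stationarity conditions, the same normalization argument showing the budget multiplier equals $N$, and the same averaging over the support of $\mathbf{y}$ to bound $\sum_{i \in T} U_i(\mathbf{y})/U_i(\mathbf{x}^*)$ by $|T|$, ruling out a blocking coalition. Your explicit handling of the degenerate case $U_i(\mathbf{x}^*)=0$ is a small extra care the paper leaves implicit, but it does not change the argument.
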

\begin{proof}
 Let $\mathbf{x}$ denote the optimal solution to (PF). Let $d$ denote the dual variable for the constraint $||\mathbf{x}|| \le 1$. By the KKT conditions, we have:
 \begin{displaymath}
x_S > 0 \implies \sum_i \frac{U_i(S)}{U_i(\mathbf{x})} = d 
 \end{displaymath}
 \begin{displaymath}
x_S = 0 \implies \sum_i \frac{U_i(S)}{U_i(\mathbf{x})} \le d  
 \end{displaymath}
 Multiplying the first set of identities by $x_S$ and summing them, we have
 $$d = d(\sum_S x_S) = \sum_i \frac{\sum_S x_S U_i(S)}{U_i(x)} = \sum_i 1 = N$$ 

This fixes the value of $d$. Next, consider a subset $T$ of users, with $|T| = K$, along with some allocation $\mathbf{y}$ with $||\mathbf{y}|| = \frac{K}{N}$. First note that the KKT conditions implied:
$$ \sum_i \frac{U_i(S)}{U_i(\mathbf{x})} \le N \ \ \forall S$$
Multiplying by $y_S$ and summing, we have:
$$ \sum_i \frac{U_i(\mathbf{y})}{U_i(\mathbf{x})}  \le N \sum_S y_S = K$$
Therefore, 
 \begin{displaymath}
 \sum_{i \in T} \frac{U_i(\mathbf{y})}{U_i(\mathbf{x})} \le K
 \end{displaymath}
Therefore, if $U_i(\mathbf{y}) > U_i(\mathbf{x})$ for some $i  \in T$, then there exists $j \in T$ for which  $U_j(\mathbf{y}) < U_j(\mathbf{x})$. This shows that no subset $T$ can deviate to improve their utility, so that the (PF) allocation lies in the core.
\end{proof}

\subsection{Discussion} 
\label{sec:discuss-fair}
Our notion of core easily extends to tenants having weights. Suppose tenant $i$ has weight $\lambda_i$. Then an allocation $\mathbf{x}$ belongs to the core if for all subsets $T$ of tenants, there does not exist $\mathbf{y}$ with $||\mathbf{y}|| = \frac{\sum_{i \in T} \lambda_i}{\sum_{i=1}^N \lambda_i}$ such that for all tenants $i \in T$, $U_i(\mathbf{x}) \le U_i(\mathbf{y})$, and $U_j(\mathbf{x}) < U_j(\mathbf{y})$ for at least one $j \in T$. The proportional fairness algorithm is modified to maximize $\sum_i \lambda_i \log U_i(\mathbf{x})$ subject to $||\mathbf{x}|| \le 1$.

We note that the PF algorithm finds an allocation in the (randomized) core to any resource allocation game that can be specified as follows: The goal is to choose a randomization over feasible configurations of resources. Each configuration yields an arbitrary utility to each tenant. This model is fairly general. For instance, consider the setting in~\cite{drf,beyonddrf}, where resources can be partitioned fractionally between agents, and an agent's rate (utility) depends on the minimum resource requirement satisfied in any dimension. Suppose we treat each agent as being endowed with $\frac{1}{N}$ fraction of the supply of resources in all dimensions, the above result shows that the (PF) allocation satisfies the property that no subset of users can pool their endowments together to achieve higher rates for all participants. 

\paragraph{Utilities under MMF and PF}
We now compare the total utility, $\sum_i V_i(\mathbf{x})$ for the optimal MMF and (PF) solutions. 
We  present results showing that (PF) has larger utility than MMF in certain canonical scenarios. Our first scenario defines the following {\em grouped} instance: There are $k$ views, $1,2,\ldots,k$ each of unit size. The cache also has size $1$. There are $k$ groups of tenants; group $i$ has $N_i$ tenants all of which want view $i$.

\begin{lemma}
The total utility of (PF) is at least the total utility of MMF for any grouped instance.
\end{lemma}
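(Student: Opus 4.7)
The plan is to exploit the simple structure of a grouped instance to obtain closed-form expressions for both the MMF and PF allocations, and then compare the resulting total utilities via a Cauchy--Schwarz inequality. Let $N = \sum_{i=1}^k N_i$ denote the total number of tenants.

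First I would characterize the set of feasible allocations. Since every view has unit size and the cache also has unit size, the only nonempty configurations are the singleton sets $\{1\}, \{2\}, \ldots, \{k\}$. An allocation therefore reduces to a probability vector $\mathbf{p} = (p_1, \ldots, p_k)$ with $\sum_i p_i \le 1$ and $p_i \ge 0$, where $p_i$ is the probability of caching view $i$. For any tenant $j$ in group $i$, $U_j(\mathbf{p}) = p_i$, and since a single-tenant allocation can place probability $1$ on view $i$, we have $U_j^* = 1$ and hence the scaled utility $V_j(\mathbf{p}) = p_i$.

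Next I would compute the two allocations in closed form. For MMF, all tenants in group $i$ have the same scaled utility $p_i$, so the lexicographic max-min objective reduces to making the smallest $p_i$ as large as possible, then the next smallest, and so on, subject to $\sum_i p_i \le 1$. A short symmetry/exchange argument shows this is uniquely achieved by $p_i = 1/k$ for every $i$, giving
\begin{equation*}
\sum_j V_j(\mathbf{p}_{\mathrm{MMF}}) = \sum_{i=1}^k N_i \cdot \frac{1}{k} = \frac{N}{k}.
\end{equation*}
For PF, the program $\max \sum_i N_i \log p_i$ s.t. $\sum_i p_i \le 1$ is concave; the KKT stationarity condition $N_i/p_i = \lambda$ together with $\sum_i p_i = 1$ yields $\lambda = N$ and $p_i = N_i/N$, so
\begin{equation*}
\sum_j V_j(\mathbf{p}_{\mathrm{PF}}) = \sum_{i=1}^k N_i \cdot \frac{N_i}{N} = \frac{1}{N}\sum_{i=1}^k N_i^2.
\end{equation*}

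Finally I would conclude with a one-line inequality. By Cauchy--Schwarz (equivalently, QM-AM applied to $N_1, \ldots, N_k$),
\begin{equation*}
N^2 = \Bigl(\sum_{i=1}^k 1 \cdot N_i\Bigr)^2 \le k \sum_{i=1}^k N_i^2,
\end{equation*}
so $\frac{1}{N}\sum_i N_i^2 \ge \frac{N}{k}$, which is exactly the desired comparison. The main (very minor) obstacle is justifying the MMF allocation rigorously: one must argue that because the scaled-utility vector is constant within each group, the lex-max-min solution must equalize across groups. Everything else is routine calculation.
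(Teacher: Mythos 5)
Your proposal is correct and follows essentially the same route as the paper: both identify the MMF allocation as $p_i = 1/k$ and the PF allocation as $p_i = N_i/N$, then compare $N/k$ with $\frac{1}{N}\sum_i N_i^2$ via the same power-mean/Cauchy--Schwarz inequality. Your write-up simply spells out the KKT and lex-max-min justifications that the paper states without proof.
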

\begin{proof}
On grouped instances, MMF sets rate $1/k$ for each tenant, yielding a total utility of $N/k$ for $N$ tenants. The (PF) algorithm sets rate $x_i = N_i/N$ for all tenants in group $i$. This yields total utility of $\sum_i N_i^2/N$. Next note that
$$ \frac{\sum_{i=1}^k N_i^2}{k} \ge \left(\frac{\sum_{i=1}^k N_i}{k}\right)^2$$
Noting that $\sum_i N_i = N$, it is now easy to verify that (PF) yields larger utility.
\end{proof}

 In fact, the ratio of the utilities of MMF and PF is precisely the Jain's index~\cite{jain} of the vector $\langle N_1,N_2,\ldots,N_k \rangle$. By setting $k = N/2+1$, and $N_2 = N_3= \cdots = N_k = 1$, this shows that (PF) can have $\Omega(N)$ times larger total utility than MMF.  Our next scenario focuses on arbitrary instances with only two tenants.

\begin{lemma}
For two tenants, the total utility of (PF) is at least the total utility of MMF.
\end{lemma}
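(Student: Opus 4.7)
The plan is to exploit the fact that the proportional fairness allocation maximizes the product $V_1(\mathbf{x})V_2(\mathbf{x})$, while for two tenants the MMF allocation must equalize the two scaled utilities. Since $\sum_i \log U_i(\mathbf{x}) = \sum_i \log V_i(\mathbf{x}) + \sum_i \log U_i^*$, Definition~\ref{definition 1} is equivalent to maximizing $V_1(\mathbf{x}) V_2(\mathbf{x})$ over feasible allocations. Let $(a,b)$ denote the PF scaled-utility vector and $(c,d)$ the MMF scaled-utility vector. I want to show $a + b \ge c + d$.

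My first step is to establish that $c = d$ for two tenants. The feasible utility region $\mathcal V = \{(V_1(\mathbf{x}), V_2(\mathbf{x})) : \|\mathbf{x}\| \le 1\}$ is convex, being the image under a linear map of a convex set of allocations. Because $V_i^* = 1$ is attained by the allocation that caches tenant $i$'s best configuration with probability $1$, $\mathcal V$ contains points of the form $(1,\beta)$ and $(\alpha, 1)$ for some $\alpha, \beta \in [0,1]$. The Pareto frontier of $\mathcal V$ therefore starts in the region $V_1 \ge V_2$ and ends in the region $V_1 \le V_2$, so by continuity it must cross the diagonal at some point $(c^*, c^*)$.

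I then argue that the MMF vector must coincide with this crossing. A short convexity argument shows that any feasible point with $\min(V_1, V_2) \ge c^*$ is dominated coordinate-wise by some frontier point whose coordinates are both at least $c^*$; but along the frontier one coordinate strictly decreases as the other increases, so the only feasible point satisfying $\min(V_1,V_2) \ge c^*$ is $(c^*, c^*)$ itself. Hence the lex-max-min allocation has $c = d = c^*$, and the MMF total utility equals $2c^*$.

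The rest is a single application of AM-GM. Since $(c^*, c^*) \in \mathcal V$ is feasible and PF maximizes $V_1 V_2$, we have $ab \ge (c^*)^2$, and therefore $a + b \ge 2\sqrt{ab} \ge 2c^* = c + d$. The main obstacle I anticipate is the ``MMF equalizes'' step: it uses both the achievability of $V_i^* = 1$ and the two-dimensional topology of the Pareto frontier. For $N \ge 3$ tenants the max-min vector need not equalize all coordinates, which is exactly why the lemma is restricted to two tenants and why the earlier grouped-instance bound was needed to make a broader point.
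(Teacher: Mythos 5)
Your proof is correct, but it follows a genuinely different route from the paper's. You first establish a structural fact -- that for two tenants the lexicographic max-min allocation must equalize the scaled utilities -- using convexity and compactness of the achievable utility region $\{(V_1(\mathbf{x}),V_2(\mathbf{x})): \|\mathbf{x}\|\le 1\}$, the fact that each $V_i^*=1$ is individually attainable, and the two-dimensional geometry of the Pareto frontier crossing the diagonal; you then finish by noting the equalized point $(c^*,c^*)$ is feasible for the PF program, so $ab\ge (c^*)^2$, and AM--GM gives $a+b\ge 2\sqrt{ab}\ge 2c^*$. The paper never proves (or needs) the equalization property: writing $a\le b$ for the PF scaled utilities and $A,B$ for the MMF ones, it uses only the two optimality inequalities $\min(A,B)\ge a$ (MMF maximizes the minimum) and $ab\ge AB$ (PF maximizes the product), and then a short ratio manipulation with $\alpha=A/a$, $\beta=B/b$ yields $b-B\ge A-a$, i.e.\ $a+b\ge A+B$, directly. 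The paper's argument is shorter and needs nothing about the feasible region beyond these two facts, whereas yours buys a clean structural insight -- it makes transparent exactly why the two-tenant restriction matters, since equalization fails for $N\ge 3$ -- at the cost of the convexity/frontier lemma, whose sketched steps (connectedness of the Pareto frontier, uniqueness of the diagonal crossing, and the contradiction showing the max-min optimum cannot have unequal coordinates when $\lambda^*<1$) do hold but deserve to be spelled out; both arguments implicitly assume $U_i^*>0$ for each tenant so that scaled utilities are well defined.
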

\begin{proof}
Let the utilities of the two tenants be $a,b$ in (PF) and $A,B$ in MMF. Assume $a \le b$. Since MMF maximizes the minimum utility, we have $a \le  \min(A,B)$. Let $\alpha = A/a$ and $\beta = B/b$, so that $\alpha \ge 1$. Since $\log{(a)} + \log{(b)} = \log{(ab)}$ is maximized by definition of PF and $\log$ is an increasing function, we have $ab \ge AB$, so $\alpha \beta \le 1$. Since $\alpha \ge 1$, this implies $1/\beta \ge \alpha \ge 1$. Therefore
$$ b - B = B\left(1/\beta  - 1\right) \ge a\left(1/\beta - 1\right) \ge a\left(\alpha - 1\right) = A - a$$
This shows $a + b \ge A + B$ completing the proof.
\end{proof}

\paragraph{Summary of Fairness Properties}
In summary, Table~\ref{table 6} shows the fairness properties that hold for all of our candidate algorithms.  We abbreviate the properties SI for sharing incentive and PE for pareto efficiency.  Based on this analysis, we suggest that proportional fairness is likely to be a preferable view selection algorithm for our ROBUS framework.  The theoretical properties of proportional fairness suggest that it should perform fairly and efficiently. 

\begin{table}[h!]
\centerline{
 \begin{tabular}{|c ||c|c|c|c|} 
 \hline
 Algorithm & SI & PE & CORE \\ [0.5ex] 
 \hline\hline
 Random Serial Dictatorship & \checkmark &  &\\
 \hline
Utility Maximization & & \checkmark & \\  
 \hline
 Max-Min Fairness & \checkmark & \checkmark &  \\
 \hline
 Proportional Fairness & \checkmark & \checkmark & \checkmark \\ [1ex]
 \hline
\end{tabular}}
\caption{Fairness properties of mechanisms}

\label{table 6}
\end{table}

\section{Approximately Computing PF and MMF Allocations} 
\label{sec:algo}

In this section, we show that the PF and MMF allocations can be computed to arbitrary precision. We then present fast heuristic algorithms for approximately computing PF and MMF allocations, which we implement in our prototype.

One key issue in computation is that the number of configurations is exponential in the number of views and tenants, so that the convex programming formulations have exponentially many variables. Nevertheless, since the programs have $O(N)$ constraints, we use the multiplicative weight method~\cite{ahk,Freund-Schapire} to solve them approximately in time polynomial in $N$ and accuracy parameter $1/\epsilon$. These algorithms assume access to a {\em welfare maximization} subroutine that we term {\sc Welfare}. 

\begin{newdefn}
Given weight vector $\mathbf{w}$, {\sc Welfare}$(\mathbf{w})$ computes a configuration $S$ that maximizes weighted scaled utilities, {\em i.e.}, solves $arg\,max_S \sum_{i=1}^{N} w_i V_i(S)$.
\end{newdefn}

The scaled utilities are computed using the tenant utility model described in Section~\ref{sec:archi}. In our presentation, we assume {\sc Welfare} solves the welfare maximization problem exactly. Our algorithms will make polynomially many calls to {\sc Welfare}. 

\paragraph{Multiplicative Weight Method}
\label{sec:ahk}
We first detail the multiplicative weight method, which will serve as a common subroutine to all our provably good algorithms. This classical framework~\cite{ahk,Freund-Schapire} uses a Lagrangian update to decide feasibility of linear constraints to arbitrary precision. 

We first define the generic problem of deciding the feasibility of a set linear constraints: Given a convex set $P\in \mathbf{R}^s$, and an $r \times s$ matrix $A$,

\begin{center}
\fbox{\begin{minipage}{2.6in}
\centerline{{\sc LP}$(A,b,P)$:  $\exists x \in P$ such that $A x \ge b$?}
\end{minipage}}
\end{center}

Let $y \ge 0$ be an $r$ dimensional dual vector for the constraints $A x \ge b$. We assume the existence of an efficient {\sc Oracle} of the form: 

\begin{center}
\fbox{\begin{minipage}{2.5in}
\centerline{ {\sc Oracle} $C(A, y) = \max\{ y^t A z : z \in P\}$.}
\end{minipage}}
\end{center}

The {\sc Oracle} can be interpreted as follows: Suppose we take a linear combination of the rows of $Ax$, multiplying row $a_i x$ by $y_i$. Suppose we maximize this as a function of $x \in P$, and it turns out to be smaller than $y^T b$. Then, there is no feasible way to satisfy all constraints in $Ax \ge b$, since the feasible solution $x$ would make $y^T A x \ge y^T b$. On the other hand, suppose we find a feasible $x$. Then, we check which constraints are violated by this $x$, and increase the dual multipliers $y_i$ for these constraints. On the other hand, if a constraint is too slack, we decrease the dual multipliers. We iterate this process until either we find a $y$ which proves $A x \ge b$ is infeasible, or the process roughly converges. 

More formally, we present the Arora-Hazan-Kale (AHK) procedure~\cite{ahk} for deciding the feasibility of {\sc LP}$(A,b,P)$.  The running time is quantified in terms of the {\sc Width} defined as:
$$\rho = \max_i \max_{x \in {P}} |a_i x - b_i|$$

\begin{algorithm}[htbp]
\caption{AHK Algorithm}
\begin{algorithmic}[1]
\State Let $K \leftarrow  \frac{4\rho^2 \log r}{\delta^2}$; $y_1 = \vec{1}$
\For{$t = 1$ {\bf to} $K$}
\State Find $x_t$ using {\sc Oracle} $C(A, y_t)$.
\If{$C(A, y_t) < y_t^T b$}
\State Declare {\sc LP}$(A,b,P)$ infeasible and terminate.
\EndIf
\For{$i = 1$ {\bf to} $r$}
\State $M_{it} = a_{i} x_t - b_{i}$  \Comment{Slack in constraint $i$.}
\State $y_{it+1} \leftarrow y_{it}(1 - \delta)^{M_{it}/\rho}$ if  $M_{it} \ge 0$. 
\State $y_{it+1} \leftarrow y_{it}(1 + \delta)^{-M_{it}/\rho}$ if  $M_{it} < 0$. 
\State \Comment{Multiplicatively update $y$.}
\EndFor
\State  Normalize $y_{t+1}$ so that $||y_{t+1}|| = 1$. 
\EndFor
\State {\bf Return} $x = \frac{1}{K} \sum_{t=1}^K x_t$.
\end{algorithmic}
\end{algorithm}

This procedure has the following guarantee~\cite{ahk}:
\begin{theorem}
If  {\sc LP}$(A,b,P)$ is feasible, the AHK procedure never declares infeasibility, and the final $x$  satisfies:
$$ (a_{i} x - b_{i})  + \delta \ge 0 \qquad \forall i$$
\end{theorem}

\subsection{Proportional Fairness}
\label{sec:pfalg}

Our algorithm uses the AHK algorithm as a subroutine and considers dual weights to find an additive $\epsilon$ approximation solution. 
The primary result is the following theorem:

\begin{theorem}
An approximation algorithm computes an additive $\epsilon$ approximation to (PF) with $O(\frac{4 N^4 \log^2 N}{\epsilon^2})$ calls to {\sc Welfare}, and polynomial additional running time.
\label{theorem 4}
\end{theorem}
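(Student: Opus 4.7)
The plan is to apply the AHK framework to (PF) by reducing it to a sequence of linear feasibility problems, using the {\sc Welfare} subroutine as the AHK oracle. First I would rewrite (PF) with auxiliary scaled-utility variables $u_i$: maximize $\sum_i \log u_i$ subject to $u_i \le \sum_S x_S V_i(S)$ for every tenant $i$, $\sum_S x_S \le 1$, and $x_S \ge 0$. Sharing Incentive together with the definition of scaled utility pins $u_i^\star \in [1/N, 1]$ at any optimum, so the objective lives in the bounded range $[-N\log N, 0]$, and $\log$ is $N$-Lipschitz on the relevant region.

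Next I would reduce PF-optimization to a sequence of AHK feasibility checks. For a guessed target vector $u^\star \in [1/N,1]^N$, ask whether the linear system $\sum_S x_S V_i(S) \ge u_i^\star$ for all $i$, $\sum_S x_S \le 1$, $x_S \ge 0$ is feasible. This is exactly {\sc LP}$(A,b,P)$ with $A_{iS} = V_i(S)$, $b = u^\star$, and $P = \{x \ge 0 : \|x\|_1 \le 1\}$. The key observation is that the required oracle $C(A,y) = \max_{x \in P} \sum_S x_S \sum_i y_i V_i(S)$ is achieved by placing unit mass on the configuration maximizing $\sum_i y_i V_i(S)$, which is precisely one call to {\sc Welfare}$(y)$. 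Because each $V_i(S) \in [0,1]$, the AHK width $\rho$ is $O(1)$ and $r = N$, so with accuracy parameter $\delta = \Theta(\epsilon/N)$ a single feasibility query uses $O(\log N / \delta^2) = O(N^2 \log N / \epsilon^2)$ {\sc Welfare} calls, and the $\delta$-slack allowed by AHK in each constraint translates, via the Lipschitz bound on $\log$, to at most additive $\epsilon$ loss in $\sum_i \log u_i$.

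To locate a near-optimal $u^\star$ I would combine binary search on the scalar PF value $V^\star \in [-N\log N, 0]$ with a geometric grid $\{1/N,(1+\epsilon')/N,\ldots,1\}$ of $O(\log N/\epsilon)$ levels per coordinate, orchestrated so that the number of AHK feasibility queries remains polynomial. Specifically, I would exploit the KKT characterization from the proof of Theorem~\ref{thm:pf}---that the dual prices are $1/u_i^\star$ and the supported configurations satisfy $\sum_i V_i(S)/u_i^\star = N$---to couple the updates of the $N$ coordinates of $u^\star$ with the AHK dual weights $y$, much like an Eisenberg--Gale type Lagrangian search. The resulting number of outer updates is $O(N^2 \log N)$; multiplying by the inner AHK cost yields the advertised $O(4N^4 \log^2 N / \epsilon^2)$ bound on {\sc Welfare} calls, with all remaining bookkeeping (grid setup, binary-search updates, averaging the iterates $x_t$ returned by AHK) polynomial in $N$ and the number of views.

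The main obstacle I expect is the third step: organizing the search over the $N$-dimensional target $u^\star$ so that only polynomially many AHK invocations are needed, rather than the naive exponential grid. The crux is showing that the Lagrangian coupling converges within $O(N^2 \log N)$ outer iterations, and simultaneously that AHK's additive slack in the linearized utility constraints propagates to at most an additive $\epsilon$ error in the true nonlinear PF objective---an argument that hinges on the SI lower bound $u_i \ge 1/N$ holding throughout the iterates.
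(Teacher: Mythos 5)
Your skeleton (AHK feasibility checks with {\sc Welfare} as the oracle, width $O(1)$ via $V_i(S)\le 1$, binary search over the bounded objective range, and the SI bound $\ge 1/N$ to control error propagation) matches the paper, but the step you yourself flag as the obstacle is exactly where your route breaks and where the paper does something different. You propose to guess an $N$-dimensional target vector $u^\star$ and then check feasibility of the purely linear system $\sum_S x_S V_i(S)\ge u_i^\star$, deferring to an unspecified ``Eisenberg--Gale type Lagrangian coupling'' of the $u^\star$ updates with the AHK duals and asserting $O(N^2\log N)$ outer iterations with no convergence argument. The paper avoids any search over the vector of targets: it introduces the scaled-utility variables $\gamma_i$ as \emph{decision variables inside the oracle's convex body}, together with the single aggregated (nonlinear but convex) constraint $\sum_i\log\gamma_i\ge Q$ and box constraints $\gamma_i\in[1/N,1]$, and hands AHK only the $N$ coupling constraints $\sum_S x_S V_i(S)-\gamma_i\ge 0$. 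Then only the scalar $Q$ is binary-searched ($O(\log N)$ steps over $[-N\log N,0]$), and the oracle subproblem decouples into (i) {\sc Welfare}$(\mathbf{w})$ over $\mathbf{x}$ and (ii) minimizing $\sum_i w_i\gamma_i$ subject to $\sum_i\log\gamma_i\ge Q$, which is solved in closed form by a one-dimensional parametric search $\gamma_i(L)=\max(1/N,\min(1,L/w_i))$. Without this ``virtual welfare'' style reformulation (or a worked-out substitute for your coupling claim), your argument does not yield a polynomial bound on the number of feasibility queries, so the proof is incomplete at its central step.

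A secondary quantitative slip: with accuracy $\delta=\Theta(\epsilon/N)$ the AHK slack in each constraint is $\epsilon/N$, which against the lower bound $\gamma_i\ge 1/N$ is only a multiplicative $(1-\epsilon)$ per tenant, hence up to $N\epsilon$ total loss in $\sum_i\log u_i$, not $\epsilon$. The paper takes $\delta=\epsilon/N^2$ so that each constraint is satisfied up to a factor $(1-\epsilon/N)$, making the total additive loss $\epsilon$; this is what forces $K=\frac{4N^4\log N}{\epsilon^2}$ inner iterations per feasibility check and, with the $O(\log N)$ binary search, the stated $O\left(\frac{4N^4\log^2 N}{\epsilon^2}\right)$ bound on {\sc Welfare} calls.
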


\paragraph{Proof}
For allocation $\mathbf{x}$, let $B(\mathbf{x}) = \sum_i \log V_i(\mathbf{x})$. Let $Q^* = \max_{\mathbf{x}} B(\mathbf{x})$ denote the optimal value of (PF), and let $\mathbf{x}^*$ denote this optimal value. We first present a Lipschitz type condition, whose proof we omit from this version. 

\begin{lemma}
Let $\mathbf{y}$ satisfy $B(\mathbf{y}) \ge Q^* - \epsilon$ for $\epsilon \in (0,1/6)$. Then, for all $i$, $V_i(\mathbf{y}) \ge V_i(\mathbf{x})/2$.
\end{lemma}

The proof idea is to use the concavity of the log function to exhibit  a convex combination of $\mathbf{x}$ and $\mathbf{y}$ whose value exceeds $Q^*$, which is a contradiction. It is therefore sufficient to find $Q^*$ to an additive approximation in order to achieve at least half the welfare of (PF) for all tenants. Towards this end, for a parameter $Q$, we write (PF) as a feasibility problem {\sc PFFeas}$(Q)$ as follows:

\begin{newdefn}
{\sc PFFeas}$(Q)$ decides the feasibility of the constraints
$$ (F) = \left\{ \sum_S x_S V_i(S) - \gamma_i \ge 0  \ \forall i\right\}$$
subject to the constraints:
$$ (P1) = \left\{\sum_S x_S \le 1, \ x_S \ge 0 \ \forall S \right\}$$
$$ (P2) = \left\{\sum_i \log \gamma_i \ge Q, \ \gamma_i \in [1/N,1] \ \forall i \right\}$$
\end{newdefn}

The above formulation is not an obvious one, and is related to {\em virtual welfare} approaches recently proposed in Bayesian mechanism design~\cite{CaiDW,BhalgatGM13}. The key idea is to connect expected values (utility) to their realizations in each configurations via expected value variables, the $\gamma_i$. The constraints (P2) and (P1) are over expected values, and realizations respectively. The {\sc Oracle} computation in the multiplicative weight procedure will decouple into optimizing expected value variables over (P2), and optimizing {\sc Welfare} over (P1) respectively, and both these problems will be easily solvable. 

We note that (P2) has additional constraints $\gamma_i \in [1/N,1] \ \forall i$. These are in order to reduce the width of the constraints (F). Note that otherwise, $\gamma_i$ can take on unbounded values while still being feasible to (P2), and this makes the width of (F) unbounded. The lower bound of $1/N$ on $\gamma_i$ is to control the approximation error introduced. We argue below that these constraints do not change our problem.

\begin{lemma}
Let $Q^*$ denote the optimal value of the proportional fair allocation (PF). Then, {\sc PFFeas}$(Q)$ is feasible if and only if $Q \le Q^*$.
\end{lemma}
\begin{proof}
In the formulation {\sc PFFeas}$(Q)$, the quantity $\gamma_i$ is simply the scaled utility of tenant $i$. Consider the proportionally fair allocation $\mathbf{x}$. For this allocation, all scaled utilities lie in $[1/N,1]$ since the allocation is SI. Therefore, $\mathbf{x}$ is feasible for {\sc PFFeas}$(Q^*)$. On the other hand, if $\mathbf{y}$ is feasible to {\sc PFFeas}$(Q)$ for $Q > Q^*$, then $\mathbf{y}$ is also feasible for (PF), contradicting the optimality of $\mathbf{x}$.
\end{proof}

We will therefore search for the largest $Q$ for which {\sc PFFeas}$(Q)$ is feasible. Since each $\gamma_i \in [1/N,1]$, we have $Q \in [-N \log N, 0]$. Therefore, obtaining an additive $\epsilon$ approximation to $Q^*$ by binary search requires $O(\log N)$ evaluations of {\sc PFFeas}$(Q)$ for various $Q$, assuming constant $\epsilon > 0$.\\

\noindent {\bf Solving {\sc PFFeas}$(Q)$.} We now fix a value $Q$ and apply the AHK procedure to decide the feasibility of {\sc PFFeas}$(Q)$. To map to the description in the AHK procedure, we have $b = 0$, and $A$ is the LHS of the constraints (F). We have $r = N$. Since any $V_i(S) \le 1$, and $\gamma_i \in [1/N,1]$, the width $\rho$ of (F) is at most $1$. Finally, for small constant $\epsilon > 0$, we will set $\delta = \frac{\epsilon}{N^2}$. Therefore, $K = \frac{4 N^4 \log N}{\epsilon^2}$. 

For dual weights $\mathbf{w}$, the oracle subproblem $C(A,\mathbf{w})$ is the following:
$$ \mbox{Max}_{\mathbf{x},\gamma} \sum_i \left(w_i V_i(\mathbf{x}) -  \gamma_i\right)$$
subject to (P1) and (P2). This separates into two optimization problems.

The first sub-problem maximizes $\sum_i w_i V_i(\mathbf{x})$ subject to $\mathbf{x}$ satisfying (P1). This is simply {\sc Welfare}$(\mathbf{w})$. The second sub-problem is the following:
$$ \mbox{Minimize} \sum_i w_i \gamma_i$$
subject to $\mathbf{w}$ satisfying (P2). Let $L$ denote the dual multiplier to the constraint $\sum_i \log \gamma_i \ge Q$. Consider the Lagrangian problem:
$$ \mbox{Minimize} \sum_i (w_i \gamma_i - L \log \gamma_i)$$ 
subject to $\gamma_i \in [1/N,1]$ for all $i$. The optimal solution sets $\gamma_i(L) = \max(1/N, \min(1,L/w_i))$, which is an non-decreasing function of $L$. We check if $\sum_i \gamma_i(L)  < Q$. If so, we increase $L$ till we satisfy the constraint with equality. This parametric search takes polynomial time, and solves the second  sub-problem.

\medskip
The AHK procedure now gives the following guarantee: Either we declare {\sc PFFeas}$(Q)$ is infeasible, or we find $(\mathbf{x},\gamma)$ such that for all $i$, we have:
$$ \sum_S x_S V_i(S) \ge \gamma_i - \epsilon/N^2 \ge \gamma_i(1-\epsilon/N)$$

Since $\sum_i \log \gamma_i \ge Q$, the above implies: 
$$B(\mathbf{x}) = \sum_i \log V_i(\mathbf{x}) \ge Q - \sum_i \log(1-\epsilon/N) \ge Q - \epsilon$$
so that the value $Q-\epsilon$ is achievable with the allocation $\mathbf{x}$. \\

\noindent {\bf Binary Search.} To complete the analysis, since {\sc PFFeas}$(Q^*)$ is feasible, the procedure will never declare infeasibility when run with $Q = Q^*$, and will find an $\mathbf{x}$ with $B(\mathbf{x}) \ge Q^* - \epsilon$, yielding an additive $\epsilon$ approximation. This binary search over $Q$ takes $O(\log N)$ iterations. \\

Thus, we arrive at the result of theorem \ref{theorem 4}.

\subsection{Max-min Fairness}
\label{sec:mmf-alg}
We present an algorithm {\sc SimpleMMF} that computes an allocation $\mathbf{x}$ maximizing $\min_i V_i(\mathbf{x})$. The MMF allocation can be computed by applying this procedure iteratively as in~\cite{choosy}; we omit  the simple details from this version. We note that the idea of applying the multiplicative weight method to compute max-min utility also appeared in~\cite{vazirani}. 

\noindent We write the problem of deciding feasibility as {\sc SimpleMMF}$(\lambda)$:
$$ (F) = \left\{ \sum_S V_i(S) x_S \ge \lambda \ \forall i \right\}$$
subject to the constraints:
$$ (P) = \left\{\sum_S x_S \le 1, \ x_S \ge 0 \ \forall S \right\}$$
We have $\lambda^* \in [1/N,1]$, where $\lambda^* = \max_{\mathbf{x}} \min_i V_i(\mathbf{x})$. Therefore, the width $\rho \le 1$. Further, we can set $\delta = \epsilon/N$. We can now compute $K$ from the AHK procedure, so that $K = \frac{4N^2 \log N}{\epsilon^2}$ in order to approximate $\lambda^*$ to a factor of $(1-\epsilon)$. The procedure is described in Algorithm~\ref{alg2}. 

\begin{algorithm}
\caption{\label{alg2}Approximation Algorithm for {\sc SimpleMMF}}
\begin{algorithmic}[1]
\State Let $\epsilon$ denote a small constant $<$ 1.
\State $T\gets \frac{4N^2 \log N}{\epsilon^2}$
\State $\mathbf{w_1}\gets {\frac{1}{N}}$ \Comment{Initial weights}
\State $\mathbf{x}\gets \mathbf{0}$ \Comment{Probability distribution over set of views}
\For {$k\in 1,2,\ldots,T$}
 \State Let $S$ be the solution to {\sc Welfare}$(\mathbf{w_k})$.
 \State $w_{i(k+1)}\gets w_{ik}  \exp(-\epsilon \frac{U_i(S)}{U_i^*})$
 \State Normalize  $\mathbf{w_{k+1}}$ so that $||\mathbf{w_{k+1}}|| = 1$.
 \State $x_S\gets x_S + \frac{1}{T}$ \Comment{Add $S$ to collection}
\EndFor
\label{mmf}
\end{algorithmic}
\end{algorithm}

In order to compute MMF allocations, we use a similar idea to decide feasibility, except that we have to perform $O(N^2)$ invocations. This blows up the running time to $O\left(\frac{4N^4 \log N}{\epsilon^2}\right)$ invocations of {\sc Welfare}.

The algorithm gives the following result:

\begin{theorem}
An approximation algorithm for {\sc SimpleMMF} (Algorithm~\ref{alg2}) finds a solution $\mathbf{x}$ such that $\min_i V_i(\mathbf{x}) \ge \lambda^* (1-\epsilon)$ using $\frac{4N^2 \log N}{\epsilon^2}$ calls to {\sc Welfare}.
\label{claim 2}
\end{theorem}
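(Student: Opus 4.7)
The plan is to recognize Algorithm~\ref{alg2} as an instance of the AHK multiplicative weight procedure applied to the feasibility problem {\sc SimpleMMF}$(\lambda^*)$, and to pipeline the AHK guarantee through. First I would identify the AHK ingredients: the rows of the constraint matrix $A$ are indexed by tenants and encode $\sum_S V_i(S) x_S$, the right-hand side is $b = \lambda^* \vec{1}$, and the convex set is the simplex $P = \{x : \sum_S x_S \le 1,\ x_S \ge 0\}$. For dual weights $\mathbf{w}$, the oracle $C(A,\mathbf{w}) = \max_{x \in P} \sum_i w_i \sum_S V_i(S) x_S$ is linear in $x$ over a simplex, so its maximum is attained at the vertex $x = e_S$ where $S$ maximizes $\sum_i w_i V_i(S)$. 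That is precisely {\sc Welfare}$(\mathbf{w})$, which identifies line 6 of Algorithm~\ref{alg2} with the AHK oracle step; the weight update in line 7 matches the AHK multiplicative update via the standard $(1-\delta) \approx e^{-\delta}$ equivalence, and line 9 builds the averaged solution $\bar{\mathbf{x}} = \tfrac{1}{T}\sum_t x_t$ that AHK ultimately returns.

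Next I would bound the width $\rho = \max_i \max_{x \in P} |\sum_S V_i(S) x_S - \lambda^*|$. Since every $V_i(S) \in [0,1]$ and $\lambda^* \in [1/N, 1]$ (the lower bound coming from the fact that RSD is SI, hence $\lambda^* \ge 1/N$), we get $\rho \le 1$. Setting $\delta = \epsilon / N$ in the AHK prescription $K = 4\rho^2 \log r / \delta^2$ with $r = N$ yields exactly $K = \tfrac{4 N^2 \log N}{\epsilon^2} = T$, matching the algorithm's iteration count and accounting for all $T$ calls to {\sc Welfare}.

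Finally, I would invoke the AHK guarantee from the theorem preceding Section~\ref{sec:pfalg}. Because {\sc SimpleMMF}$(\lambda^*)$ is feasible by definition of $\lambda^*$, the procedure never declares infeasibility and instead returns $\bar{\mathbf{x}}$ with
\begin{equation*}
\sum_S V_i(S)\, \bar{x}_S \;\ge\; \lambda^* - \delta \;=\; \lambda^* - \frac{\epsilon}{N} \qquad \forall i.
\end{equation*}
To convert this additive guarantee into the claimed multiplicative one, I would use $\lambda^* \ge 1/N$ once more: $\lambda^* - \epsilon/N \ge \lambda^* - \epsilon \lambda^* = (1-\epsilon)\lambda^*$, so $\min_i V_i(\bar{\mathbf{x}}) \ge (1-\epsilon)\lambda^*$, as claimed.

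The main obstacle I anticipate is not conceptual depth but bookkeeping fidelity: showing that the specific update $w_{i(k+1)} = w_{ik}\exp(-\epsilon V_i(S))$ in Algorithm~\ref{alg2} really is the AHK update instantiated with the parameter choices above (so that the two-sided $(1\pm\delta)^{\pm M/\rho}$ form collapses to a one-sided exponential because all slacks here have fixed sign once $b = \lambda^* \vec{1}$), and verifying that the parameter substitution $\delta = \epsilon/N$ together with the SI-derived bound $\lambda^* \ge 1/N$ is what trades the additive $\delta$ error for the multiplicative $(1-\epsilon)$ error stated in the theorem.
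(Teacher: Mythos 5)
Your proposal is correct and takes essentially the same route as the paper: it casts Algorithm~\ref{alg2} as the AHK procedure on the feasibility system {\sc SimpleMMF}$(\lambda)$ with {\sc Welfare} as the oracle over the simplex, bounds the width by $1$, sets $\delta = \epsilon/N$ to get $K = \frac{4N^2\log N}{\epsilon^2}$ oracle calls, and converts the additive $\delta$ guarantee into the multiplicative $(1-\epsilon)$ bound via $\lambda^* \ge 1/N$. One small correction to your anticipated bookkeeping point: the collapse of the two-sided $(1\pm\delta)^{\pm M/\rho}$ update into the one-sided exponential of $-\epsilon V_i(S)$ is justified because the common $\lambda$ term in every slack is a constant across tenants and is absorbed by the normalization of the weight vector (together with $(1-\delta)^{M}\approx e^{-\delta M}$), not because the slacks $V_i(\mathbf{x}_t)-\lambda$ have a fixed sign --- they need not.
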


\renewcommand{\S}{\mathcal{S}}
\subsection{Fast Heuristics}
\label{sec:heuristic}

In this section, we present heuristic algorithms that directly work with the exponential size convex programs. We directly implement these algorithms in software to gather our experimental results. 

\paragraph{Configuration Pruning}  For $M = O(N^2)$, generate $M$ random $N$-dimensional unit vectors $w_k, k = 1,2,\ldots, M$. For each $w_k$, let $S_k$ be the configuration corresponding to {\sc Welfare}$(w_k)$. Denote this set of configurations by $\S$. We restrict the convex programming formulations of PF and MMF to just the set of configurations $\S$, and solve these programs directly, as we describe below.  The intuition behind doing this pruning step is the following: The approximation algorithms for PF and MMF find convex combinations of configurations that are optimal for {\sc Welfare}$(w)$ for some $w$'s that are computed by the multiplicative weight procedure. Instead of this, we generate random such Pareto-optimal configurations, giving sufficient coverage so that each tenant has a high probability of having the maximum weight at least once.

We compared two algorithms for {\sc SimpleMMF}, one using the multiplicative weight procedure (Algorithm~\ref{alg2}), and the other solving the linear program (Program (\ref{eq3}) below) restricted to random optimal configurations. When run on 200 batches with five tenants, using 5 weight vectors gives a 10.4\% approximation to the objective of {\sc SimpleMMF}.  With 25 random weight vectors, the approximation error is 1.4\%, and using 50 random weights, the approximation error drops to 0.6\%. This shows that a small set $\S$  of configurations that are optimal solutions to {\sc Welfare}$(w)$ for random vectors $w$ is sufficient to generate good approximations to our convex programs. In our implementation, we set $\S$ to be the union of  these configurations along with the configurations generated by the {\sc SimpleMMF} algorithm (Algorithm~\ref{alg2}). 

\paragraph{Proportional Fairness} We first note that (PF) is equivalent to the following; the proof of equivalence follows from Theorem~\ref{thm:pf}, where the dual variable corresponding to the constraint $\sum_S x_S = 1$ is precisely $N$.
\begin{equation}
\mbox{Max } g(\mathbf{x}) = \sum_{i=1}^{N} \log(V_i(\mathbf{x})) - N \|\mathbf{x}\| \ \ \mbox{ s.t.: } \mathbf{x} \geq 0
\label{equation 2}
\end{equation}
Given a configuration space $\S$, we can solve the program (\ref{equation 2}) using gradient descent, as shown in Algorithm~\ref{alg3}. As precomputation, for each configuration $S \in \S$, we precompute $V_i(S)$. Then $V_i(\mathbf{x}) = \sum_{S \in \S} V_i(S) x_S$. 

\begin{algorithm}[htbp]
\caption{\label{alg3}Proportional Fairness Heuristic}
\label{algo:fpf}
\begin{algorithmic}[1]
\State Let $M = |\S|$. Set $t = 1$.
\State Let $\mathbf{x}_1 = (1/M,1/M,\ldots,1/M)$.
\Repeat
\State $\mathbf{y} = \nabla g(\mathbf{x})$ evaluated at $\mathbf{x} = \mathbf{x}_t$.
\State $r^* = arg\,max_r\left( g(\mathbf{x}_t+r\mathbf{y}) \right)$
\State $\mathbf{x}_{t+1} = \mathbf{x}_t+r^*\mathbf{y}$
\State Project $\mathbf{x}_{t+1}$ as: $x_d = \max(x_d, 0)$ for all dimensions $d \in \{1,2,\ldots,M\}$.
\Until{$\mathbf{x}_t$ converges}
\end{algorithmic}
\end{algorithm}

\paragraph{Max-min Fairness} Using the precomputed configuration space $\S$, we solve {\sc SimpleMMF} using the following linear program:
\begin{equation}
\label{eq3}
\max\left\{\lambda \ | \ \sum_{S \in \S} V_i(S) x_S \ge \lambda \ \forall i, \mathbf{x} \ge \mathbf{0} \right\}
\end{equation}
This can be solved using any off-the-shelf LP solver (our implementation uses the open source lpsolve package \cite{lpsolve}). In order to compute the MMF allocation, we iteratively compute the lexicographically max-min allocation using the above LP. The details are standard; see for instance~\cite{choosy}. Briefly, in each iteration a value of $\lambda$ is computed.  All tenants whose rate cannot be increased beyond $\lambda$ without decreasing the rate of another tenant are considered saturated and the rate of $\lambda$ for these tenants is a constraint in the next iteration of the LP.  The solution to the final LP for which all tenants are saturated is the MMF solution.

\section{Evaluation} 
\label{sec:eval}

 We evaluate cache allocation policies on a variety of practical setups of multi-tenant analytics clusters. The setups may differ in the number of tenants, workload arrival patterns, data access patterns, etc. Some of the example setups are listed below. 

\squishlist

 \item{Analysts}: Tenants correspond to various BI analysts in an enterprise that run a similar workload. Some of the datasets are frequently accessed by all tenants suggesting a good opportunity for shared optimization.

 \item{ETL+Analysts}: All analysts have similar data access patterns as above. But additionally, a tenant runs ETL workload that may touch different datasets than the BI tenants. 


 \item{Production+Engineering}: Engineering workload is of bursty nature. Depending on the time of day, engineering queues have different amounts of work whereas production queues, running pre-scheduled workflows, have similar amounts of work.

\squishend

We replicate various combinations of these setups on a small-scale Spark cluster and run controlled experiments using a mix of TPC-H benchmark~\cite{tpch} workload and a synthetically generated scan-based workload. 

\subsection{Setup and Methodology}
\label{sec:setup}

Figure~\ref{fig:architecture} has presented the architecture of \robus. We use Apache Spark~\cite{spark} to build a system prototype. Spark is a natural choice for the evaluation since it supports distributed memory-based abstraction in the form of Resilient Distributed Datasets (RDDs). 
In our prototype, a long running Spark context is shared among multiple queues, with each queue corresponding to a tenant. The Spark context has an access to the entire RDD cache in the cluster. Spark's internal fair share scheduler is configured with a dedicated pool for each queue; the fair share properties of the pool are set proportional to weight of the corresponding queue. 

\begin{table}[h!]
\centering
 \begin{tabular}{|l | c |} 
 \hline
 Spark version & 1.1.1 \\
 \hline
 Number of worker nodes & 10 \\ 
 \hline
 Instance type of nodes & c3.2xlarge \\
 \hline
 Total number of cores & 80 \\
 \hline
 Executor memory & 80GB \\ 
 \hline
\end{tabular}

\caption{Test cluster setup on Amazon EC2}

\label{tab:setup}
\end{table}

Table~\ref{tab:setup} presents our test cluster setup. We generate two types of data to reflect two types of uses observed in typical multi-tenant clusters: (a) A set of 30 datasets with varying sizes each matching schema of the ``sales'' tables---{$store\_sales$, $catalog\_sales$, and $web\_sales$}---from TPC-DS benchmark~\cite{tpcds} data, and (b) All TPC-H benchmark~\cite{tpch} datasets generated at scale 5. 

The first category of data represents raw fact/log data that comes
into the cluster from the OLTP/operational databases in a company.
This data is processed by synthetically-generated ETL and exploratory
SQL queries, each performing scans and aggregations over a dataset. We
refer to this category of queries as the {\sf Sales} workload. Total
size of {\sf Sales} data on disk is 600GB. We create a vertical
projection view on each dataset on its most frequently accessed
columns and use it whenever possible to answer queries. Sizes of these
views when loaded to cache range from 118MB to 3.6GB as can be seen in Figure~\ref{fig:datasizes}.

The second category represents data in the cluster after it has been
processed by ETL. Note that this data is typically much smaller in
size compared to fact/log data. In our experiments, this data is
queried by standard {\sf TPC-H} benchmark queries which consist of a
suite of business-oriented analytics and involves more complex
operations, such as joins, compared to the {\sf Sales} workload. All
of the queries in our evaluation are submitted using SparkSQL APIs.

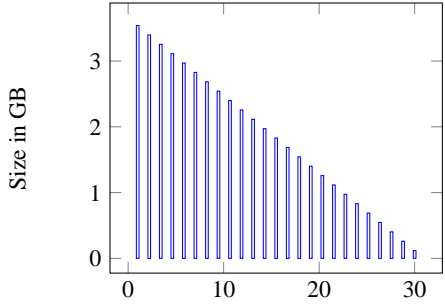
\begin{figure}
\pgfplotsset{width=6cm}
\centering

\begin{tikzpicture}
\begin{axis}[
ylabel=Size in GB,
]
\addplot[ybar, blue, bar width=1, domain=1:30] {.118*(31-x)};
{plots/dataset-sizes.dat};

\end{axis}
\end{tikzpicture}

\caption{Cache size estimates of candidate {\sf Sales} views}

\label{fig:datasizes}
\end{figure}

We set the cache size to 8GB, 10\% of the total executor memory, leaving aside the rest as a heap space. Only 6GB of the cache is used to carry out our optimizations in order to avoid memory management issues our Spark installation experienced while evicting from a near-full cache. 

The tenant utility model we use to estimate the utility of a cache configuration in our evaluation
 is based on the observations made from real-life clusters in~\cite{pacman}. If all the datasets that a query needs are cached, then the query is assigned a utility equal to the total size of data it reads; which corresponds to the savings in disk read I/O. Otherwise, we assign a utility of zero. It is observed in~\cite{pacman} that queries do not benefit much from in-memory caching if any part of their working set is not cached. 

 The workflow is described in Section~\ref{sec:archi} already. Here we want to add the fact that the cache update phase in our evaluation setup only marks datasets for caching or uncaching using Spark's cache directives. Spark lazily updates the cache when the first query requesting cached data from the batch is scheduled for execution.

\paragraph{Workload Arrival and Data Access}

\begin{figure}
\centering
\includegraphics[width=0.8\columnwidth]{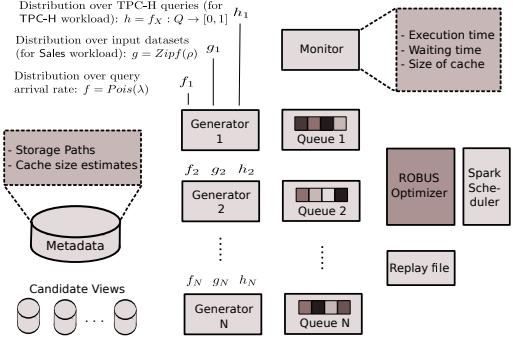}
\caption{Workload generation for \robus}
\label{fig:generation}

\end{figure}

Figure~\ref{fig:generation} shows our workload generation process. Several studies have established that query arrival times follow a Poisson distribution~\cite{gray, taobao}. We use the same in our prototype. 
Previous studies have also indicated that the data accessed by analytical workloads follows a Zipf distribution~\cite{gray, adolescence}: A small number of datasets are more popular than others, while there is a long tail of datasets that are only sporadically accessed. To replicate such data access, our synthetic {\sf Sales} workload generator picks a dataset from a Zipfian distribution provided at the time of configuration and adds grouping and aggregation predicates from a probability distribution defined for the chosen dataset. The {\sf TPC-H} workload generator, on the other hand, picks a benchmark query from a probability distribution over the queries provided at the time of configuration.

Further,~\cite{adolescence} also shows that 90\% of recently accessed data is re-accessed within next hour of first access. This makes a lot of sense because users typically want to drill down a dataset further in response to some interesting observation obtained in the previous run. In order to support such scenarios, we pick a small window in time from a Normal distribution. Over this window, a small subset of datasets is chosen from the Zipfian $g$. This subset forms candidates for the duration of the window. Each query to be generated picks one dataset from the candidates uniformly at random. This technique is taken from~\cite{gray} which terms the values used in local window as ``cold'' values to differentiate them from globally popular ``hot'' values. The generated workload still follows the Zipfian $g$ globally. The local distribution is optional; If not provided, datasets are picked from Zipfian $g$ at all times. 

\subsection{Performance Metrics}
\label{sec:metrics}
We gather several performance metrics while executing a workload.  They are defined next. We emphasize that these metrics are over long time horizons.
\begin{enumerate}
 \item {\bf Throughput.} This is simple to define:
 \begin{equation}
\mbox{Throughput }= \frac{\text{number of queries served}}{\text{total time taken}}
 \end{equation}

 \item {\bf Fairness Index.} 
 For job schedulers, a performance-based fairness index is defined in terms of variance in slowdowns of jobs in a shared cluster compared to a baseline case where every job receives all the resources~\cite{quincy}. As our work is about speeding up queries, we use relative speedups across queries while deriving fairness. The baseline is the case of statically partitioned cache. Here, $X_i$ is the mean speedup for tenant $i$, and $\lambda_i$ is the weight of tenant $i$.
 \begin{equation}
  \mbox{Fairness index }= \frac{(\sum_{i=1}^n \frac{X_i}{\lambda_i})^2}{n \sum_{i=1}^n ({\frac{X_i}{\lambda_i}})^2}
 \end{equation}
 
 \item {\bf Average Cache Utilization.}
This is simply the average fraction of cache utilized during workload execution.

 \item {\bf Hit Ratio.}
The fraction of queries served off cached views.

\end{enumerate}

Some of the other metrics we collect include flow time, mean execution time, mean wait time, and wait time fairness index. They are not included due to space constraints.

\subsection{Algorithm Evaluation}
\label{sec:tradeoffs}
In this section, we evaluate four view selection algorithms on various setups. Each algorithm processes a batch of query workload in an offline manner as detailed in Section~\ref{sec:archi}. Section~\ref{sec:model} discussed several possible algorithms. Here, we compare the following:
\begin{enumerate}
 \item \static: Cache is partitioned in proportion to weights of the tenants. We treat this as baseline when evaluating fairness index.
 \item \mmf: Max-min fairness implementation described in Section~\ref{sec:heuristic}.
 \item \fastpf: Proportional fairness implementation described in Section~\ref{sec:heuristic}.
 \item \opt: The only goal is to optimize for query performance; Workload from a batch is treated as if belonging to a single tenant -- a special case of either \mmf\ or \fastpf.
\end{enumerate}

In order to compare these algorithms across various settings, we vary the following parameters independently in our experiments.

\begin{enumerate}
 \item Data sharing among tenants (Section~\ref{sec:data});
 \item Workload arrival rate (Section~\ref{sec:arrival}); and
 \item Number of tenants (Section~\ref{sec:tenants}).
\end{enumerate}

\subsubsection{Effect of data sharing among tenants}
\label{sec:data}
To study the impact of different data sharing patterns on the performance of algorithms, we create four different workload distributions: $h_1$ picks queries uniformly at random over a set of 15 {\sf TPC-H} benchmark queries; $g_1-g_3$ create three different Zipf distributions over 30 {\sf Sales} datasets over which scan-and-aggregation queries are generated. Each of the distributions is skewed towards a different subset of datasets. Using these distributions, we create four test setups allowing different levels of data sharing, as listed in Table~\ref{tab:datasetup}. The batch size is set to 40 seconds; the inter-query arrival time distribution for all the tenants is given by Poisson(20); and we run 30 batches of workload for every data point.

\begin{table}
\centering

 \begin{tabular}{| c | c |} 
 \hline
 Setup & Distributions used by four tenants \\ 
 \hline
 $\mathcal{G}_1$ & \{$h_1$, $h_1$, $h_1$, $h_1$\} \\
 \hline
 $\mathcal{G}_2$ & \{$h_1$, $h_1$, $h_1$, $g_1$\} \\ 
 \hline
 $\mathcal{G}_3$ & \{$h_1$, $h_1$, $g_1$, $g_2$\} \\
 \hline
 $\mathcal{G}_4$ & \{$h_1$, $g_1$, $g_2$, $g_3$\} \\
 \hline
\end{tabular}

\caption{Data access distributions used in evaluation on a mixed workload}
\vspace*{-0.4cm}
\label{tab:datasetup}
\end{table}

\pgfplotscreateplotcyclelist{my black white}{%
solid, every mark/.append style={solid, fill=gray}, mark=*\\%
densely dashed, every mark/.append style={solid, fill=gray},mark=square*\\%
densely dotted, every mark/.append style={solid, fill=gray}, mark=diamond*\\%
dashdotted, every mark/.append style={solid, fill=gray},mark=star\\%
dotted, every mark/.append style={solid, fill=gray}, mark=square*\\%
loosely dotted, every mark/.append style={solid, fill=gray}, mark=triangle*\\%
loosely dashed, every mark/.append style={solid, fill=gray},mark=*\\%
dashdotted, every mark/.append style={solid, fill=gray},mark=otimes*\\%
dasdotdotted, every mark/.append style={solid},mark=star\\%
densely dashdotted,every mark/.append style={solid, fill=gray},mark=diamond*\\%
}

\begin{figure*}
\centering
\pgfplotsset{width=4.8cm}

\begin{tikzpicture}
\begin{axis}[
title=Throughput(/min),
symbolic x coords={1,2,3,4},
xticklabels={$\mathcal{G}_1$, $\mathcal{G}_2$, $\mathcal{G}_3$, $\mathcal{G}_4$},
enlargelimits=0.15,
xtick=data,
ymin=4,
ymax=20,
legend columns=-1,
legend entries={\static,\mmf,\fastpf,\opt},
legend to name=named,
cycle list name=my black white,
]
\addplot+[color=green, sharp plot] coordinates
{(1,7.8) (2,7.2) (3,7.2) (4,5.4)};
\addplot+[color=blue, sharp plot] coordinates
{(1,19.2) (2,9.0) (3,7.5) (4,5.4)};
\addplot+[color=red, sharp plot] coordinates
{(1,19.2) (2,10.2) (3,7.8) (4,5.4)};
\addplot+[color=black, sharp plot] coordinates
{(1,19.2) (2,16.2) (3,9.6) (4,4.8)};
\end{axis}

\end{tikzpicture}
\quad
\begin{tikzpicture}

\begin{axis}[
title=Average cache utilization,
ymin=0,
ymax=1,
symbolic x coords={1,2,3,4},
xticklabels={$\mathcal{G}_1$, $\mathcal{G}_2$, $\mathcal{G}_3$, $\mathcal{G}_4$},
enlargelimits=0.15,
xtick=data,
cycle list name=my black white,
]
\addplot+[color=green, sharp plot] coordinates
{(1,0) (2,.08) (3,.16) (4,.24)};
\addplot+[color=blue, sharp plot] coordinates
{(1,.83) (2,.81) (3,.96) (4,.91)};
\addplot+[color=red, sharp plot] coordinates
{(1,.83) (2,.87) (3,.98) (4,.93)};
\addplot+[color=black, sharp plot] coordinates
{(1,.83) (2,.92) (3,1.0) (4,.96)};
\end{axis}

\end{tikzpicture}
\quad
\begin{tikzpicture}

\begin{axis}[
title=Hit ratio,
ymin=0,
ymax=0.9,
symbolic x coords={1,2,3,4},
xticklabels={$\mathcal{G}_1$, $\mathcal{G}_2$, $\mathcal{G}_3$, $\mathcal{G}_4$},
enlargelimits=0.15,
xtick=data,
cycle list name=my black white,
]
\addplot+[color=green, sharp plot] coordinates
{(1,0) (2,.08) (3,.19) (4,.26)};
\addplot+[color=blue, sharp plot] coordinates
{(1,1) (2,.54) (3,.53) (4,.43)};
\addplot+[color=red, sharp plot] coordinates
{(1,1) (2,.68) (3,.55) (4,.47)};
\addplot+[color=black, sharp plot] coordinates
{(1,1) (2,.83) (3,.67) (4,.46)};
\end{axis}

\end{tikzpicture}
\quad
\begin{tikzpicture}

\begin{axis}[
title=Fairness index,
symbolic x coords={1,2,3,4},
xticklabels={$\mathcal{G}_1$, $\mathcal{G}_2$, $\mathcal{G}_3$, $\mathcal{G}_4$},
enlargelimits=0.15,
ymin=0.3,
xtick=data,
cycle list name=my black white,
]
\addplot+[color=green, sharp plot] coordinates
{(1,1) (2,1) (3,1) (4,1)};
\addplot+[color=blue, sharp plot] coordinates
{(1,.7) (2,.83) (3,.77) (4,.81)};
\addplot+[color=red, sharp plot] coordinates
{(1,.7) (2,.79) (3,.66) (4,.8)};
\addplot+[color=black, sharp plot] coordinates
{(1,.7) (2,.75) (3,.5) (4,.38)};
\end{axis}

\end{tikzpicture}
\\
\ref{named}

\caption{Effect of data sharing changes on four equi-paced tenants on a mixed workload}
\label{fig:tpchaccess}
\end{figure*}
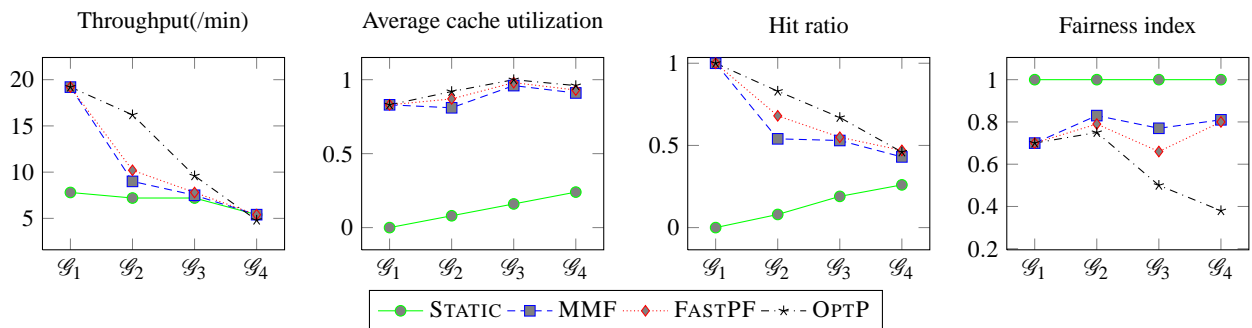

Figure~\ref{fig:tpchaccess} shows how different algorithms perform in each of these setups. Throughput goes down with heterogeneity in data access. \static\ policy fails to cache any dataset for {\sf TPC-H} workload because each of the queries we generate reads the largest table, {\em lineitem}, which amounts to $\approx 3.8GB$, much larger than cache at the disposal of \static. The other three policies, on the other hand, can serve every query off cache in setup $\mathcal{G}_1$. As a result, they exhibit a throughput of more than 2x over \static. However, as the heterogeneity in data access increases, the gap in throughput narrows. Even though the shared policies cache more data, frequent updates to cache configuration per batch cause additional delays. We explore possibility of retaining state of cache in Section~\ref{sec:evaldiscuss}.

Among the shared cache policies, \opt\ scores high on throughput but very low on fairness index. It uses cache exclusively for {\sf TPC-H} tenants at the cost of degradation to {\sf Sales} tenants' performance. \mmf\ and \fastpf\ policies, on the other hand show much better tradeoffs in terms of performance and fairness to tenants.

We also repeated the same experiment on {\sf Sales} data alone. We first create four different Zipf distributions over candidate views:  $g_1, g_2, g_3, g_4$. Each of the distributions is skewed towards a different subset of views.We create four test setups, each allowing a different level of data sharing, as listed in Table~\ref{tab:datasetupsales}. The other common parameters are listed in Table~\ref{tab:sdataaccess}. 

\begin{table}
\centering
 \begin{tabular}{| c | c |} 
 \hline
 Setup & Distributions used by four tenants \\ 
 \hline
 $\mathcal{G}_1$ & \{$g_1$, $g_1$, $g_1$, $g_1$\} \\
 \hline
 $\mathcal{G}_2$ & \{$g_1$, $g_1$, $g_1$, $g_2$\} \\ 
 \hline
 $\mathcal{G}_3$ & \{$g_1$, $g_1$, $g_2$, $g_3$\} \\
 \hline
 $\mathcal{G}_4$ & \{$g_1$, $g_2$, $g_3$, $g_4$\} \\
 \hline
\end{tabular}

\caption{Data access distributions used in evaluation on {\sf Sales} workload}

\label{tab:datasetupsales}
\end{table}

\begin{table}
\centering
 \begin{tabular}{| l | c |} 
 \hline
 Parameter & Value \\ 
 \hline
 Query inter-arrival rates (sec) & \{20 $\forall$ tenant\} \\
 \hline
 Batch size (sec) & 40 \\ 
 \hline
 Number of batches & 30 \\
 \hline
\end{tabular}

\caption{Data sharing experiment setup}

\label{tab:sdataaccess}
\end{table}

\pgfplotscreateplotcyclelist{my black white}{%
solid, every mark/.append style={solid, fill=gray}, mark=*\\%
densely dashed, every mark/.append style={solid, fill=gray},mark=square*\\%
densely dotted, every mark/.append style={solid, fill=gray}, mark=diamond*\\%
dashdotted, every mark/.append style={solid, fill=gray},mark=star\\%
dotted, every mark/.append style={solid, fill=gray}, mark=square*\\%
loosely dotted, every mark/.append style={solid, fill=gray}, mark=triangle*\\%
loosely dashed, every mark/.append style={solid, fill=gray},mark=*\\%
dashdotted, every mark/.append style={solid, fill=gray},mark=otimes*\\%
dasdotdotted, every mark/.append style={solid},mark=star\\%
densely dashdotted,every mark/.append style={solid, fill=gray},mark=diamond*\\%
}

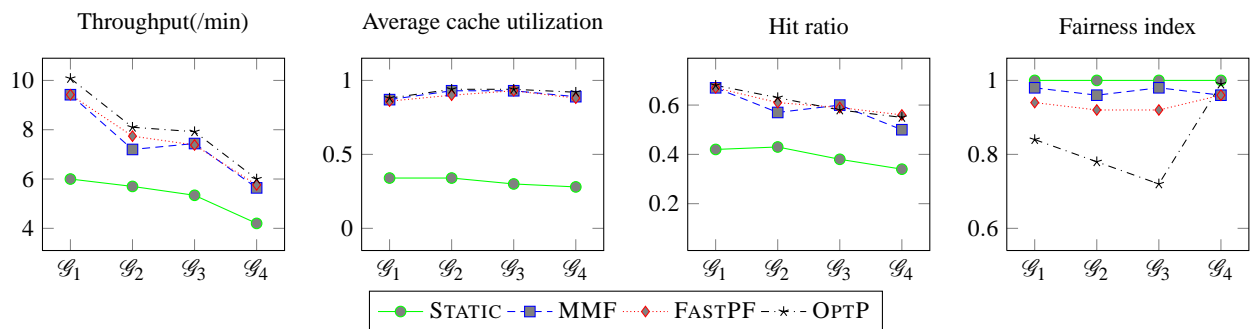
\begin{figure*}
\centering
\pgfplotsset{width=4.8cm}

\begin{tikzpicture}
\begin{axis}[
title=Throughput(/min),
symbolic x coords={1,2,3,4},
xticklabels={$\mathcal{G}_1$, $\mathcal{G}_2$, $\mathcal{G}_3$, $\mathcal{G}_4$},
enlargelimits=0.15,
xtick=data,
ymin=4,
ymax=10,
legend columns=-1,
legend entries={\static,\mmf,\fastpf,\opt},
legend to name=named,
cycle list name=my black white,
]
\addplot+[color=green, sharp plot] coordinates
{(1,6.0) (2,5.7) (3,5.34) (4,4.2)};
\addplot+[color=blue, sharp plot] coordinates
{(1,9.42) (2,7.2) (3,7.44) (4,5.64)};
\addplot+[color=red, sharp plot] coordinates
{(1,9.42) (2,7.74) (3,7.38) (4,5.76)};
\addplot+[color=black, sharp plot] coordinates
{(1,10.08) (2,8.1) (3,7.92) (4,6.0)};
\end{axis}

\end{tikzpicture}
\quad
\begin{tikzpicture}

\begin{axis}[
title=Average cache utilization,
ymin=0,
ymax=1,
symbolic x coords={1,2,3,4},
xticklabels={$\mathcal{G}_1$, $\mathcal{G}_2$, $\mathcal{G}_3$, $\mathcal{G}_4$},
enlargelimits=0.15,
xtick=data,
cycle list name=my black white,
]
\addplot+[color=green, sharp plot] coordinates
{(1,.34) (2,.34) (3,.30) (4,.28)};
\addplot+[color=blue, sharp plot] coordinates
{(1,.87) (2,.93) (3,.93) (4,.89)};
\addplot+[color=red, sharp plot] coordinates
{(1,.86) (2,.90) (3,.93) (4,.88)};
\addplot+[color=black, sharp plot] coordinates
{(1,.88) (2,.94) (3,.94) (4,.92)};
\end{axis}

\end{tikzpicture}
\quad
\begin{tikzpicture}

\begin{axis}[
title=Hit ratio,
ymin=0.1,
ymax=0.7,
symbolic x coords={1,2,3,4},
xticklabels={$\mathcal{G}_1$, $\mathcal{G}_2$, $\mathcal{G}_3$, $\mathcal{G}_4$},
enlargelimits=0.15,
xtick=data,
cycle list name=my black white,
]
\addplot+[color=green, sharp plot] coordinates
{(1,.42) (2,.43) (3,.38) (4,.34)};
\addplot+[color=blue, sharp plot] coordinates
{(1,.67) (2,.57) (3,.60) (4,.50)};
\addplot+[color=red, sharp plot] coordinates
{(1,.67) (2,.61) (3,.59) (4,.56)};
\addplot+[color=black, sharp plot] coordinates
{(1,.68) (2,.63) (3,.58) (4,.55)};
\end{axis}

\end{tikzpicture}
\quad
\begin{tikzpicture}

\begin{axis}[
title=Fairness index,
symbolic x coords={1,2,3,4},
xticklabels={$\mathcal{G}_1$, $\mathcal{G}_2$, $\mathcal{G}_3$, $\mathcal{G}_4$},
enlargelimits=0.15,
ymin=0.6,
xtick=data,
cycle list name=my black white,
]
\addplot+[color=green, sharp plot] coordinates
{(1,1) (2,1) (3,1) (4,1)};
\addplot+[color=blue, sharp plot] coordinates
{(1,.98) (2,.96) (3,.98) (4,.96)};
\addplot+[color=red, sharp plot] coordinates
{(1,.94) (2,.92) (3,.92) (4,.96)};
\addplot+[color=black, sharp plot] coordinates
{(1,.84) (2,.78) (3,.72) (4,.99)};
\end{axis}

\end{tikzpicture}
\\
\ref{named}

\caption{Effect of data sharing changes on four equi-paced tenants on {\sf Sales} workload}

\label{fig:access}
\end{figure*}

\begin{figure}
\centering
\pgfplotsset{width=4.5cm}

\begin{tikzpicture}
\begin{axis}[
legend columns=-1,
legend entries={\mmf,\fastpf,\opt},
legend to name=named,
title=Top 3 views from $g_1$,
ybar,
bar width=5pt,
enlargelimits=0.15,
ymin=0.1,
ymax=0.6,
]
\addplot+[fill=blue, postaction={pattern=north east lines}] plot coordinates
{(1,.4) (2,.57) (3,.33)};
\addplot+[fill=red, postaction={pattern=north west lines}] plot coordinates
{(1,.5) (2,.47) (3,.37)};
\addplot+[fill=black] plot coordinates
{(1,.53) (2,.5) (3,.37)};

\end{axis}
\end{tikzpicture}
\quad%
\begin{tikzpicture}
\begin{axis}[
title=Top 3 views from $g_2$,
ybar,
bar width=5pt,
enlargelimits=0.15,
ymin=0.1,
ymax=0.6,
]
\addplot+[fill=blue, postaction={pattern=north east lines}] plot coordinates
{(1,.37) (2,.37) (3,.23)};
\addplot+[fill=red, postaction={pattern=north west lines}] plot coordinates
{(1,.23) (2,.4) (3,.17)};
\addplot+[fill=black] plot coordinates
{(1,.2) (2,.37) (3,.27)};

\end{axis}
\end{tikzpicture}
\\
\ref{named}

\caption{Fraction of time the popular views in setup $\mathcal{G}_2$ were cached}

\label{fig:cached}
\end{figure}
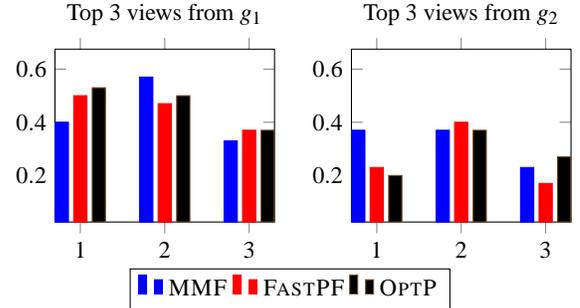

Figure~\ref{fig:access} shows how different algorithms perform in each of these setups. Throuphput goes down with heterogeneity in data access. \static\ performs poorly in all the setups, the performance being between 30\%-40\% worse of the others. Its lower cache utilization and lower hit ratio are further indicators of why \static\ is not the right choice for cache allocation. There is very little to distinguish among the three cache-sharing algorithms. This shows that our fair algorithms can provide a throughput close to the optimal. In terms of fairness, \opt\ algorithm gives the most inconsistent performance. It scores high in the setup with most heterogeneity, but fails when data sharing is involved. \mmf\ and \fastpf, on the other hand, score high in all the setups. 

The performance of \mmf\ interestingly falls alarmingly low in the second setup. This is clearly an outcome of the data sharing pattern wherein three of four tenants largely share the same subset of views. Recollecting the example presented in Table~\ref{table 4}, \mmf\ tries to share the cache (probabilistically) equally between the two sets of tenants effectively producing an allocation off the core. We include a chart showing the duration the most popular views were cached for by \mmf, \fastpf, and \opt. (Figure~\ref{fig:cached}) Top three views in each of $g_1$ and $g_2$ serve 25\%, 13\%, and 8\% of the queries respectively. It can be seen that while \mmf\ caches the topmost view from the distributions roughly equally, \fastpf\ and \opt\ favor the topmost view from $g_1$ more since it is shared by three tenants. \mmf\ tries to compensate the three tenants by caching their second best view more, but this view has a lower utility both due to lower access frequency and smaller size. So the overall performance of \mmf\ suffers in this case.

\subsubsection{Effect of variance in query arrival rates}
\label{sec:arrival}
To replicate the bursty tenants scenarios, we vary query inter-arrival rates of tenants in a two-tenant setup. We create three setups---{\em low}, {\em mid}, and {\em high}---with query inter-arrival rates as listed in Table~\ref{tab:arrival}. The other parameters used in each of the setups are listed in Table~\ref{tab:sarrivalrate}. 

\begin{table}[h!]
\centering
 \begin{tabular}{|c | c | c |} 
 \hline
 Setup & Poisson mean, $\lambda_1$ & Poisson mean, $\lambda_2$ \\ 
 \hline
 {\em low} & 12 & 12 \\
 \hline
 {\em mid} & 18 & 8 \\ 
 \hline
 {\em high} & 24 & 6 \\ 
 \hline
\end{tabular}

\caption{Query inter-arrival rates for different setups}

\label{tab:arrival}
\end{table}

\begin{table}
\centering
 \begin{tabular}{| l | c |} 
 \hline
 Parameter & Value \\ 
 \hline
 Data access distributions & $\{g_1, g_2\}$ \\
 \hline
 Batch size (sec) & 72 \\ 
 \hline
 Number of batches & 30 \\
 \hline
\end{tabular}

\caption{Query arrival rate experiment setup}

\label{tab:sarrivalrate}
\end{table}

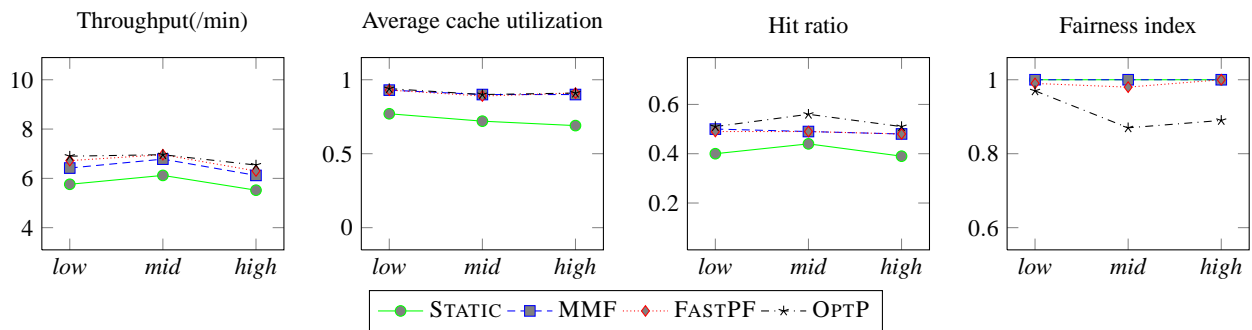
\begin{figure*}
\centering
\pgfplotsset{width=4.8cm}

\begin{tikzpicture}
\begin{axis}[
title={Throughput(/min)},
symbolic x coords={1,2,3},
xticklabels={\em{low}, \em{mid}, \em{high}},
xtick=data,
ymin=4,
ymax=10,
enlargelimits=0.15,
legend columns=-1,
legend entries={\static,\mmf,\fastpf,\opt},
legend to name=named,
cycle list name=my black white,
]
\addplot+[color=green,sharp plot] coordinates
{(1,5.76) (2,6.12) (3,5.52)};
\addplot+[color=blue,sharp plot] coordinates
{(1,6.42) (2,6.78) (3,6.12)};
\addplot+[color=red, sharp plot] coordinates
{(1,6.72) (2,6.96) (3,6.3)};
\addplot+[color=black, sharp plot] coordinates
{(1,6.9) (2,6.96) (3,6.54)};
\end{axis}

\end{tikzpicture}
\quad
\begin{tikzpicture}

\begin{axis}[
symbolic x coords={1,2,3},
xticklabels={\em{low}, \em{mid}, \em{high}},
xtick=data,
title={Average cache utilization},
ymin=0,
ymax=1,
enlargelimits=0.15,
cycle list name=my black white,
]
\addplot+[color=green,sharp plot] coordinates
{(1,0.77) (2,0.72) (3,0.69)};
\addplot+[color=blue,sharp plot] coordinates
{(1,0.93) (2,0.90) (3,0.90)};
\addplot+[color=red, sharp plot] coordinates
{(1,.93) (2,.89) (3,.91)};
\addplot+[color=black, sharp plot] coordinates
{(1,.94) (2,.90) (3,.91)};
\end{axis}

\end{tikzpicture}
\quad
\begin{tikzpicture}

\begin{axis}[
symbolic x coords={1,2,3},
xticklabels={\em{low}, \em{mid}, \em{high}},
xtick=data,
title={Hit ratio},
ymin=0.1,
ymax=0.7,
enlargelimits=0.15,
cycle list name=my black white,
]
\addplot+[color=green,sharp plot] coordinates
{(1,0.40) (2,0.44) (3,0.39)};
\addplot+[color=blue,sharp plot] coordinates
{(1,0.50) (2,0.49) (3,0.48)};
\addplot+[color=red, sharp plot] coordinates
{(1,.49) (2,.49) (3,.48)};
\addplot+[color=black, sharp plot] coordinates
{(1,.51) (2,.56) (3,.51)};
\end{axis}
\end{tikzpicture}
\quad
\begin{tikzpicture}

\begin{axis}[
symbolic x coords={1,2,3},
xticklabels={\em{low}, \em{mid}, \em{high}},
xtick=data,
title={Fairness index},
ymin=0.6,
enlargelimits=0.15,
cycle list name=my black white,
]
\addplot+[color=green,sharp plot] coordinates
{(1,1) (2,1) (3,1)};
\addplot+[color=blue,sharp plot] coordinates
{(1,1) (2,1) (3,1)};
\addplot+[color=red, sharp plot] coordinates
{(1,.99) (2,.98) (3,1)};
\addplot+[color=black, sharp plot] coordinates
{(1,.97) (2,.87) (3,.89)};
\end{axis}

\end{tikzpicture}
\\
\ref{named}

\caption{Effect of variance in query arrival rates}

\label{fig:arrival}
\end{figure*}

Figure~\ref{fig:arrival} shows the impact of variance in query arrival rate on various metrics. The performance of \static\ remains below the other three algorithms as can be seen from the first three graphs. The performance gap, however, is small because the cache is partitioned in only two parts for \static\ each part being large enough to serve 80\% of the queries that could be served off unpartitioned cache. When it comes to the fairness index, all the algorithms except \opt\ get a near-perfect score. \opt\ favors the faster tenant in both {\em mid} and {\em high} setups so much that the slower tenant's performance degrades. Figure~\ref{fig:speedup} shows the speedups for \mmf, \fastpf, and \opt\ relative to \static\ under the setup {\em high}. It can be seen that the first tenant sees a performance degradation with \opt\ empirically proving the fact that \opt\ is not sharing incentive.

\begin{figure}
\centering
\pgfplotsset{width=4.5cm}

\begin{tikzpicture}
\begin{axis}[
legend columns=1,
legend pos=outer north east,
ylabel=mean speedup,
xtick={1,2,3},
xticklabels={\mmf, \fastpf, \opt},
ytick={1},
yticklabels={baseline},
nodes near coords,
every node near coord/.append style={font=\tiny},
enlarge y limits={value=0.2,upper},
ybar,
bar width=5pt,
enlargelimits=0.15,
]
\addplot+[orange, fill=orange, postaction={pattern=north east lines}] plot coordinates
{(1,1.28) (2,1.38) (3,0.67)};
\addplot+[green, fill=green, postaction={pattern=north west lines}] plot coordinates
{(1,1.37) (2,1.36) (3,1.41)};
\draw [black, line width=2pt] (axis cs:\pgfkeysvalueof{/pgfplots/xmin},1)--(axis cs:\pgfkeysvalueof{/pgfplots/xmax},1);

\legend{Tenant 1, Tenant 2}
\end{axis}
\end{tikzpicture}

\caption{Mean speedups provided by different algorithms over \static\ policy for the two tenants in the setup {\em high}}

\label{fig:speedup}
\end{figure}
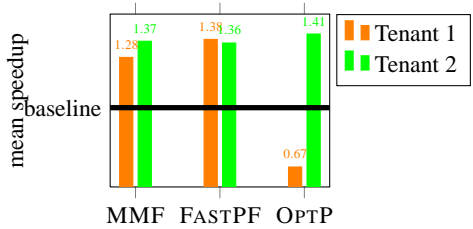

\subsubsection{Effect of number of tenants}
\label{sec:tenants}

\begin{table}[h!]
\centering
 \begin{tabular}{|c | c |} 
 \hline
 Setup & Poisson mean, $\lambda$ \\ 
 \hline
 2 & 10 \\
 \hline
 4 & 20 \\ 
 \hline
 8 & 40 \\ 
 \hline
\end{tabular}

\caption{Query inter-arrival rates for a tenant under different setups}

\label{tab:tenants}
\end{table}

\begin{table}
\centering
 \begin{tabular}{| l | c |} 
 \hline
 Parameter & Value \\ 
 \hline
 Data access distributions & \{$g_1 \forall$ tenant\} \\
 \hline
 Batch size (sec) & 40 \\ 
 \hline
 Number of batches & 30 \\
 \hline
\end{tabular}

\caption{Number of tenants experiment setup}

\label{tab:stenants}
\end{table}

To further stress the utility of optimizing the entire cache as a shared resource, we experiment with increasing number of tenants. Specifically, we consider scenarios with 2, 4, and 8 tenants, all using the same distribution over dataset access. We try to keep the number of queries per batch the same by doubling query inter-arrival rate with doubling of the number of tenants, batch size remaining the same across the setups. Table~\ref{tab:tenants} lists the query inter-arrival rates we used. The other parameters common across the setups are listed in Table~\ref{tab:stenants}.

\pgfplotscreateplotcyclelist{my black white}{%
solid, every mark/.append style={solid, fill=gray}, mark=*\\%
densely dashed, every mark/.append style={solid, fill=gray},mark=square*\\%
densely dotted, every mark/.append style={solid, fill=gray}, mark=diamond*\\%
dashdotted, every mark/.append style={solid, fill=gray},mark=star\\%
dotted, every mark/.append style={solid, fill=gray}, mark=square*\\%
loosely dotted, every mark/.append style={solid, fill=gray}, mark=triangle*\\%
loosely dashed, every mark/.append style={solid, fill=gray},mark=*\\%
dashdotted, every mark/.append style={solid, fill=gray},mark=otimes*\\%
dasdotdotted, every mark/.append style={solid},mark=star\\%
densely dashdotted,every mark/.append style={solid, fill=gray},mark=diamond*\\%
}

\begin{figure*}
\centering
\pgfplotsset{width=4.8cm}

\begin{tikzpicture}
\begin{axis}[
title={Throughput(/min)},
symbolic x coords={2,4,6,8},
xtick=data,
ymin=4,
ymax=10,
enlargelimits=0.15,
legend columns=-1,
legend entries={\static,\mmf,\fastpf,\opt},
legend to name=named,
cycle list name=my black white,
]
\addplot+[color=green, sharp plot] coordinates
{(2,7) (4,6) (8,5.34)};
\addplot+[color=blue, sharp plot] coordinates
{(2,10) (4,9.4) (8,8.34)};
\addplot+[color=red, sharp plot] coordinates
{(2,9.7) (4,9.4) (8,8.22)};
\addplot+[color=black, sharp plot] coordinates
{(2,10.4) (4,10.1) (8,9.18)};
\end{axis}

\end{tikzpicture}
\quad
\begin{tikzpicture}

\begin{axis}[
symbolic x coords={2,4,6,8},
xtick=data,
title={Average cache utilization},
ymin=0,
ymax=1,
enlargelimits=0.15,
cycle list name=my black white,
]
\addplot+[color=green,sharp plot] coordinates
{(2,0.67) (4,0.35) (8,0.07)};
\addplot+[color=blue,sharp plot] coordinates
{(2,0.93) (4,0.87) (8,0.82)};
\addplot+[color=red, sharp plot] coordinates
{(2,.93) (4,.86) (8,.82)};
\addplot+[color=black, sharp plot] coordinates
{(2,.97) (4,.88) (8,.87)};
\end{axis}

\end{tikzpicture}
\quad
\begin{tikzpicture}

\begin{axis}[
symbolic x coords={2,4,6,8},
xtick=data,
title={Hit ratio},
ymin=0.1,
ymax=0.7,
enlargelimits=0.15,
cycle list name=my black white,
]
\addplot+[color=green,sharp plot] coordinates
{(2,.5) (4,.42) (8,.26)};
\addplot+[color=blue,sharp plot] coordinates
{(2,.68) (4,0.67) (8,0.65)};
\addplot+[color=red, sharp plot] coordinates
{(2,.68) (4,.67) (8,.65)};
\addplot+[color=black, sharp plot] coordinates
{(2,.68) (4,.68) (8,.68)};
\end{axis}
\end{tikzpicture}
\quad
\begin{tikzpicture}

\begin{axis}[
symbolic x coords={2,4,6,8},
xtick=data,
title={Fairness index},
ymin=0.6,
enlargelimits=0.15,
cycle list name=my black white,
]
\addplot+[color=green,sharp plot] coordinates
{(2,1) (4,1) (8,1)};
\addplot+[color=blue,sharp plot] coordinates
{(2,0.98) (4,0.98) (8,0.94)};
\addplot+[color=red, sharp plot] coordinates
{(2,1) (4,.94) (8,.91)};
\addplot+[color=black, sharp plot] coordinates
{(2,1) (4,.84) (8,.78)};
\end{axis}

\end{tikzpicture}
\\
\ref{named}

\caption{Effect of changing number of tenants}

\label{fig:tenants}
\end{figure*}
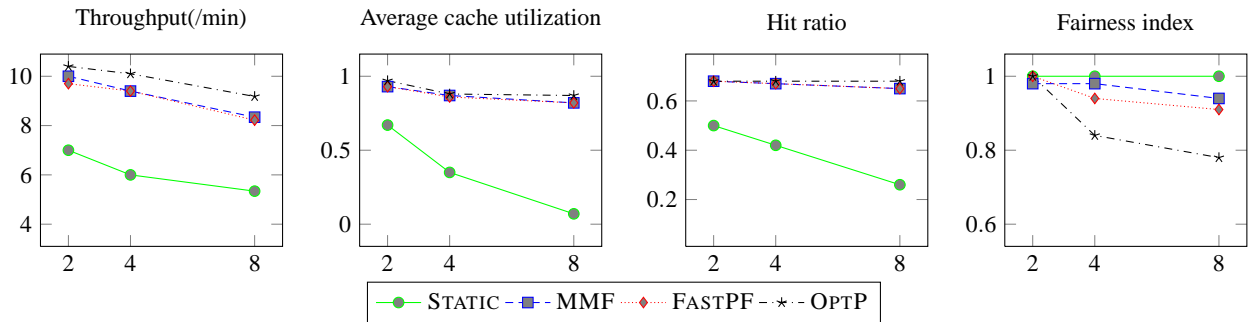

Figure~\ref{fig:tenants} shows behavior of the algorithms under these scenarios. The gap in throughput between \static\ and the other algorithms is large (35\%-45\%). As the number of tenants goes up, the average cache utilization of \static\ drops sharply, whereas the average cache utilization of the other algorithms remain largely stable. This can be attributed to the static partitioning of cache in \static. The hit ratio shows a similar pattern again showing why \static\ is not the best choice. In terms of fairness index, \opt\ finds it increasingly harder to provide a fair solution. With an increase in the number of tenants, the number of queries per tenant per batch goes down which makes the locally optimal choices of \opt\ more unlikely to provide equal speedups. In contrast, \mmf\ and \fastpf, with their randomized choices, score over 0.9 in all the scenarios exhibiting their superiority.

\subsection{Discussion and Future Work}
\label{sec:evaldiscuss}
Our evaluation on practical setups  brings up some interesting insights that opens up multiple possibilities for the future. We discuss some of the challenges and the directions here.

Our experiments show that across all setups, \fastpf\ and \mmf\ provide far better trade-offs in throughput and fairness compared to \static\ and \opt.  We note that in comparing max-min fair and proportional fair implementations, there is no clear winner. We believe this is a second order difference that a more precise cost model and implementations of the exact algorithms (for instance, the algorithm in Section~\ref{sec:pfalg} for proportional fairness) will bring out. However, even given similar empirical results, PF has the advantage of the core property as a succinct and easy to explain notion of fairness.

We next note that the {\bf running time} of our algorithms is polynomial in number of tenants. In most typical industry setups, the ones we evaluated, there is only a handful number of tenants. Therefore, we expect our algorithms to be fast even in the wild. Just to quantify the query wait times, we observed them to be of the order of tens of milliseconds in most cases.

\paragraph{Convergence Properties}
\pgfplotscreateplotcyclelist{my black white}{%
densely dashed, every mark/.append style={solid, fill=gray},mark=square*\\%
densely dotted, every mark/.append style={solid, fill=gray}, mark=diamond*\\%
solid, every mark/.append style={solid, fill=gray}, mark=triangle*\\%
dashdotted, every mark/.append style={solid, fill=gray},mark=star\\%
dotted, every mark/.append style={solid, fill=gray}, mark=square*\\%
loosely dotted, every mark/.append style={solid, fill=gray}, mark=triangle*\\%
loosely dashed, every mark/.append style={solid, fill=gray},mark=*\\%
dashdotted, every mark/.append style={solid, fill=gray},mark=otimes*\\%
dasdotdotted, every mark/.append style={solid},mark=star\\%
densely dashdotted,every mark/.append style={solid, fill=gray},mark=diamond*\\%
}
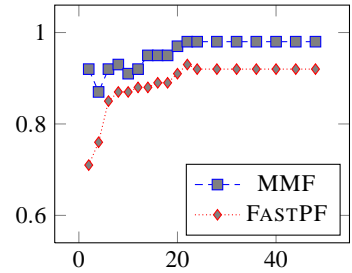
\begin{figure}
\pgfplotsset{width=5.5cm}
\centering
\begin{tikzpicture}
\begin{axis}[
ymin=0.6, ymax=1,
enlargelimits=0.15,
legend pos=south east,
cycle list name=my black white,
]
\addplot+[color=blue,sharp plot] coordinates
{(2,0.92) (4,0.87) (6,0.92) (8,0.93) (10,0.91) (12,0.92) (14,0.95) (16,0.95) (18,0.95) (20,0.97) (22,0.98) (24,0.98) (28,0.98) (32,0.98) (36,0.98) (40,0.98) (44,0.98) (48,0.98)};
\addplot+[color=red,sharp plot] coordinates
{(2,0.71) (4,0.76) (6,0.85) (8,0.87) (10,0.87) (12,0.88) (14,0.88) (16,0.89) (18,0.89) (20,0.91) (22,0.93) (24,0.92) (28,0.92) (32,0.92) (36,0.92) (40,0.92) (44,0.92) (48,0.92)};
\addlegendentry{\mmf}
\addlegendentry{\fastpf}
\end{axis}
\end{tikzpicture}

 \caption{Fairness index as a function of number of batches}

 \label{fig:conv}
\end{figure}

As our algorithms are randomized in nature, it is important to study how long they take to converge to solutions that yield fairness across time. After running several workloads, we find that the number of batches to achieve convergence is very small, of the order of 15-25. In Figure~\ref{fig:conv}, we present results of a four tenant workload with 50 batches, optimized once using \mmf\ and once using \fastpf. The fairness index was computed after every 2-4 batches. It can be seen that both algorithms converge to their respective optimal values at around 20 batches. As a future work, we plan to systematically study which parameters define rate of convergence of the algorithms.

\paragraph{Batch Size and Cache State}
Our batched processing architecture introduces additional optimization choices. Primarily, there are two ways of tuning a view selection algorithm: \\
1. Controlling batch size, and \\
2. Managing state of cache across batches.

The first option needs no elaboration. The second option is whether the cache is treated as {\em stateful} or as {\em stateless} when optimizing a batch. In the former case, the estimated benefit of views that are already in cache is boosted by a factor $\gamma > 1$. This influences the next cache allocation, and makes it more likely for these views to stay in the cache. The latter case ignores the state of the cache when considering the next batch. All the results presented so far have used stateless cache.

\pgfplotscreateplotcyclelist{my black white}{%
densely dashed, every mark/.append style={solid, fill=gray},mark=square*\\%
solid, every mark/.append style={solid, fill=gray}, mark=triangle*\\%
densely dotted, every mark/.append style={solid, fill=gray}, mark=diamond*\\%
dashdotted, every mark/.append style={solid, fill=gray},mark=otimes\\%
dotted, every mark/.append style={solid, fill=gray}, mark=square*\\%
loosely dotted, every mark/.append style={solid, fill=gray}, mark=triangle*\\%
loosely dashed, every mark/.append style={solid, fill=gray},mark=*\\%
dashdotted, every mark/.append style={solid, fill=gray},mark=otimes*\\%
dasdotdotted, every mark/.append style={solid},mark=star\\%
densely dashdotted,every mark/.append style={solid, fill=gray},mark=diamond*\\%
}

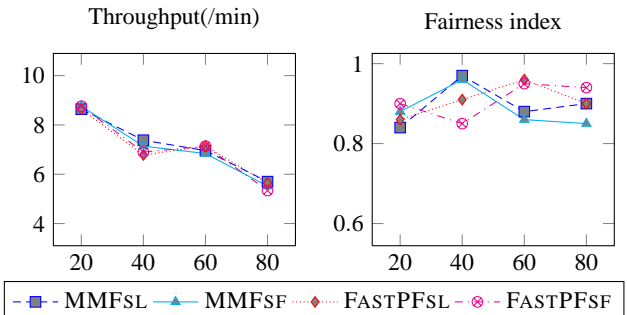
\begin{figure}
\centering
\pgfplotsset{width=4.8cm}

\begin{tikzpicture}

\begin{axis}[
symbolic x coords={20,40,60,80},
xtick=data,
ymin=4,
ymax=10,
title=Throughput(/min),
enlargelimits=0.15,
legend columns=-1,
legend entries={\mmfsl,\mmfsf,\pfsl,\pfsf},
legend to name=named,
cycle list name=my black white,
]
\addplot+[color=blue, sharp plot] coordinates
{(20,8.64) (40,7.38) (60,6.96) (80,5.7)};
\addplot+[color=cyan, sharp plot] coordinates
{(20,8.76) (40,7.14) (60,6.84) (80,5.52)};
\addplot+[color=red, sharp plot] coordinates
{(20,8.64) (40,6.78) (60,7.14) (80,5.64)};
\addplot+[color=magenta, sharp plot] coordinates
{(20,8.76) (40,6.9) (60,7.14) (80,5.34)};
\end{axis}

\end{tikzpicture}
\quad
\begin{tikzpicture}

\begin{axis}[
symbolic x coords={20,40,60,80},
xtick=data,
title=Fairness index,
ymin=0.6,
enlargelimits=0.15,
cycle list name=my black white,
]
\addplot+[color=blue, sharp plot] coordinates
{(20,0.84) (40,0.97) (60,0.88) (80,0.9)};
\addplot+[color=cyan, sharp plot] coordinates
{(20,0.88) (40,0.96) (60,0.86) (80,0.85)};
\addplot+[color=red, sharp plot] coordinates
{(20,0.86) (40,0.91) (60,0.96) (80,0.9)};
\addplot+[color=magenta, sharp plot] coordinates
{(20,0.9) (40,0.85) (60,0.95) (80,0.94)};
\end{axis}

\end{tikzpicture}
\\
\ref{named}

\caption{Effect of batch size on four equi-paced tenants setup}

\label{fig:batchsize}
\end{figure}

We empirically compared how the algorithms react to these parameters. Figure~\ref{fig:batchsize} shows effect of change in batch size on two versions each of \mmf\ and \fastpf: one treating the cache as stateless (\mmfsl\ and \pfsl), and the other treating it as stateful (\mmfsf\ and \pfsf), with $\gamma = 2$. It can be seen that both versions provide similar throughput in all the cases. It can be observed that the stateful algorithms score higher on fairness for the smallest batch size but there is no clear pattern seen when the batch size is larger. It makes sense since the lower batch sizes do not give enough choices for fair configurations of cache and maintaining the state results in an artificial increase of the batch size thereby providing better configurations. As a future work, we plan to explore these trade-offs on a larger scale to devise better guidelines on parameter tuning.

\paragraph{Engineering issues}
We now highlight some challenges in scaling up our experiments to industry scale. These challenges are tied to engineering issues in current implementations of systems such as Spark, and will get ironed out over time. Most common multi-tenant Spark setups use a separate Spark {\em context} for each tenant, effectively partitioning cache. In fact, most current multi-tenant data warehouse systems recommend splitting memory across queues. This is in part due to multi-thread management challenges that result in unpredictable behavior such as premature eviction of cached data blocks. Another engineering issue, specific to Spark, is the inordinately long delays in garbage collection when cluster scales up. 
We should be able to see a much better impact of \robus\ optimization once these practical issues get resolved. 

\paragraph{Code Base}
The code base of \robus\ has been open-sourced~\cite{robus-github} and our entire experimental setup can be replicated following a simple set of instructions provided with the code.

\section{Related Work}
\label{sec:related}

\paragraph{Physical design tuning and Multi-query optimization} Classical view materialization algorithms in databases~\cite{gupta, yang, automated, Goldstein, mistry} treat entire workload as a set and optimize towards one or more of the flow time, space budget, and view maintenance costs. Online physical design tuning approaches~\cite{tuning, dynamat, restore, miso}, on the other hand, adapt to changes in query workload by modifying physical design. None of the afore-mentioned approaches support multi-tenant workloads and therefore cannot be used in selecting views for caching. However, some of the techniques used, in particular candidate view enumeration, view matching, and query rewrite, can be applied in \robus\ framework.

Batched optimization of queries was proposed in~\cite{sellis} and is used in many work sharing approaches~\cite{cooperative,sharedscan,mrshare}. \robus\ employs batched query optimization likewise, but crucially also ensures that each tenant gets their fair share of benefit.

\paragraph{Fairness theory}
The proportional fairness algorithm is widely studied in Economics~\cite{bargaining,eisenberg,budish} as well as in scheduling theory~\cite{drf,beyonddrf,KMT,KellyMW09,ImKM14,Stolyar,AndrewsPF}. In the context of resource partitioning problems (or exchange economies)~\cite{continuum,exchange}, it is well-known that a convex program, called the Eisenberg-Gale convex program~\cite{eisenberg} computes prices that implement a Walrasian equilibrium (or market clearing solution). Our shared resource allocation problem is different from allocation problems where resources need to be partitioned, and it is not clear how to specify prices for resources (or views) in our setting. Nevertheless, we show that there is an exponential size convex program using configurations as variables, whose solution implements proportional fairness in a  randomized sense. 

In scheduling theory, the focus is on analyzing delay properties~\cite{KMT,KellyMW09,ImKM14} assuming jobs have durations. Our focus is instead on utility maximization, which has also been considered in the context of wireless scheduling in~\cite{Stolyar,AndrewsPF}. The latter work focuses on {\em long-term} fairness for partitioned resources, where utility of a tenant is defined as sum of discounted utilities across time. The resulting algorithms, though simple, only provide guarantees assuming job arrivals are ergodic, and if tenants exist forever. They do not provide per-epoch guarantees. In contrast, we focus on obtaining per-epoch fairness in a randomized sense without ergodic assumptions, and on defining the right fairness concepts when resources are shared. We finally note that~\cite{noagentleftbehind} presents dynamic schemes for achieving envy-freeness across time; however, these techniques are specific to resource partitioning problems and to not directly apply to our shared resource setting.

\paragraph{Multi-tenant architectures}
Traditionally, the notion of multi-tenancy in databases deals with sharing database system resources, viz. hardware, process, schema, among users~\cite{multi-tenant, multi-tenant2, sqlvm}. Each tenant only accesses data owned by them. Emerging multi-tenant big data architectures, on the other hand, allow for entire cluster data to be shared among tenants. This sharing of data is critical in our work as it allows the cache to be used much more efficiently.

A critical component of modern multi-tenant architectures, such as Apache Hadoop, Apache Spark, Cloudera Impala, is a fair scheduler/ resource allocator~\cite{hadoop-fair, drf, quincy}. The resource pool considered by these schedulers do not differentiate the cache resource from the heap resource and as a result divides the cache among tenants. As seen in our work, partitioned cache setups severely reduce optimization opportunities. Some recent approaches treat cache as an independent resource when running multiple jobs. PACMan~\cite{pacman} exploits multi-wave execution workflow of Hadoop jobs to make caching decisions at the granularity of parallely running tasks of a job. In another work, LRU policies for Buffer pool memory are extended to meet SLA guarantees of multiple tenants~\cite{bufferpool}. However, none of the approaches exploit the opportunities presented by multi-shared nature of cache resource. In another advancement, distributed analytics systems are supporting a distributed cache store shared by multiple tenants~\cite{tachyon, ddm}. \robus\ optimizer will be a natural fit for such systems.

\section{Conclusion} 
\label{sec:conclusion}
 Emerging Big data multi-tenant analytics systems complement an abundant disk-based storage with a smaller, but much faster, cache in order to optimize workloads by materializing views in the cache. The cache is a shared resource, i.e., cached data can be accessed by all tenants. In this paper, we presented \robus, a cache management platform for achieving both a fair allocation of cache and a near-optimal performance in such architectures. We defined notions of fairness for the shared settings using randomization in small batches as a key tool. We presented a fairness model that incorporates Pareto-efficiency and sharing incentive, and also achieves envy-freeness via the notion of core from cooperative game theory. We showed a proportionally fair mechanism to satisfy the core property in expectation. Further, we developed efficient algorithms for two fair mechanisms and implemented them in a \robus\ prototype built on a Spark cluster. Our experiments on various practical setups show that it is possible to achieve near-optimal fairness, while simultaneously preserving near-optimal performance speedups using the algorithms we developed. 
 
Our framework is quite general and applies to any setting where resource allocations are shared across agents. As future work, we plan to explore other applications of this framework.


\bibliographystyle{abbrv}
\bibliography{refs}  

\pagebreak

\appendix

\section{Experiment results}

\subsection{Results of experiments on effect of data sharing on mixed workload}

\begin{table}[h!]
\centering
\resizebox{\columnwidth}{!}{
\begin{tabular}{| l | c | c | c | c | } 
 \hline
 Metric & \static & \mmf & \fastpf & \opt \\
 \hline
 Throughput(/min) & 7.80 & 19.2 & 19.2 & 19.2 \\
 \hline
 Avg cache util. & 0.00 & 0.83 & 0.83 & 0.83 \\
 \hline
 Hit ratio & 0.00 & 1.00 & 1.00 & 1.00 \\
 \hline
 Fairness index & 1.00 & 0.71 & 0.71 & 0.71 \\
 \hline
\end{tabular}
}
\caption{Performance of algorithms on setup $\mathcal{G}_1$}

\label{tab:access1}
\end{table}

\begin{table}[h!]
\centering
\resizebox{\columnwidth}{!}{
\begin{tabular}{| l | c | c | c | c | } 
 \hline
 Metric & \static & \mmf & \fastpf & \opt \\
 \hline
 Throughput(/min) & 7.20 & 9.00 & 10.2 & 16.2 \\
 \hline
 Avg cache util. & 0.08 & 0.81 & 0.87 & 0.92 \\
 \hline
 Hit ratio & 0.08 & 0.54 & 0.68 & 0.83 \\
 \hline
 Fairness index & 1.00 & 0.83 & 0.79 & 0.75 \\
 \hline
\end{tabular}
}
\caption{Performance of algorithms on setup $\mathcal{G}_2$}

\label{tab:access2}
\end{table}

\begin{table}[h!]
\centering
\resizebox{\columnwidth}{!}{
\begin{tabular}{| l | c | c | c | c | } 
 \hline
 Metric & \static & \mmf & \fastpf & \opt \\
 \hline
 Throughput(/min) & 7.20 & 7.50 & 7.80 & 9.60 \\
 \hline
 Avg cache util. & 0.16 & 0.96 & 0.98 & 1.00 \\
 \hline
 Hit ratio & 0.19 & 0.53 & 0.55 & 0.67 \\
 \hline
 Fairness index & 1.00 & 0.77 & 0.66 & 0.50 \\
 \hline
\end{tabular}
}
\caption{Performance of algorithms on setup $\mathcal{G}_3$}

\label{tab:access3}
\end{table}

\begin{table}[h!]
\centering
\resizebox{\columnwidth}{!}{
\begin{tabular}{| l | c | c | c | c | } 
 \hline
 Metric & \static & \mmf & \fastpf & \opt \\
 \hline
 Throughput(/min) & 5.40 & 5.40 & 5.40 & 4.80 \\
 \hline
 Avg cache util. & 0.24 & 0.91 & 0.93 & 0.96 \\
 \hline
 Hit ratio & 0.26 & 0.43 & 0.47 & 0.46 \\
 \hline
 Fairness index & 1.00 & 0.81 & 0.80 & 0.38 \\
 \hline
\end{tabular}
}
\caption{Performance of algorithms on setup $\mathcal{G}_4$}

\label{tab:access4}
\end{table}

\FloatBarrier

\subsection{Results of experiments on effect of data sharing on Sales workload}

\begin{table}[h!]
\centering
\resizebox{\columnwidth}{!}{
\begin{tabular}{| l | c | c | c | c | } 
 \hline
 Metric & \static & \mmf & \fastpf & \opt \\
 \hline
 Throughput(/min) & 6.00 & 9.42 & 9.42 & 10.08 \\
 \hline
 Avg cache util. & 0.34 & 0.87 & 0.86 & 0.88 \\
 \hline
 Hit ratio & 0.42 & 0.67 & 0.67 & 0.68 \\
 \hline
 Fairness index & 1.00 & 0.98 & 0.94 & 0.84 \\
 \hline
\end{tabular}
}
\caption{Performance of algorithms on setup $\mathcal{G}_1$}

\label{tab:access1}
\end{table}

\begin{table}[h!]
\centering
\resizebox{\columnwidth}{!}{
\begin{tabular}{| l | c | c | c | c | } 
 \hline
 Metric & \static & \mmf & \fastpf & \opt \\
 \hline
 Throughput(/min) & 5.70 & 7.20 & 7.44 & 8.24 \\
 \hline
 Avg cache util. & 0.34 & 0.93 & 0.90 & 0.94 \\
 \hline
 Hit ratio & 0.43 & 0.57 & 0.61 & 0.63 \\
 \hline
 Fairness index & 1.00 & 0.96 & 0.92 & 0.78 \\
 \hline
\end{tabular}
}
\caption{Performance of algorithms on setup $\mathcal{G}_2$}

\label{tab:access2}
\end{table}

\begin{table}[h!]
\centering
\resizebox{\columnwidth}{!}{
\begin{tabular}{| l | c | c | c | c | } 
 \hline
 Metric & \static & \mmf & \fastpf & \opt \\
 \hline
 Throughput(/min) & 5.34 & 7.44 & 7.38 & 7.92 \\
 \hline
 Avg cache util. & 0.30 & 0.93 & 0.93 & 0.94 \\
 \hline
 Hit ratio & 0.38 & 0.60 & 0.59 & 0.58 \\
 \hline
 Fairness index & 1.00 & 0.98 & 0.92 & 0.72 \\
 \hline
\end{tabular}
}
\caption{Performance of algorithms on setup $\mathcal{G}_3$}

\label{tab:access3}
\end{table}

\begin{table}[h!]
\centering
\resizebox{\columnwidth}{!}{
\begin{tabular}{| l | c | c | c | c | } 
 \hline
 Metric & \static & \mmf & \fastpf & \opt \\
 \hline
 Throughput(/min) & 4.20 & 5.64 & 5.76 & 6.00 \\
 \hline
 Avg cache util. & 0.28 & 0.89 & 0.88 & 0.92 \\
 \hline
 Hit ratio & 0.34 & 0.50 & 0.56 & 0.55 \\
 \hline
 Fairness index & 1.00 & 0.96 & 0.96 & 0.99 \\
 \hline
\end{tabular}
}
\caption{Performance of algorithms on setup $\mathcal{G}_4$}

\label{tab:access4}
\end{table}

\FloatBarrier

\subsection{Results of experiments on effect of query arrival rate}

\begin{table}[h!]
\centering
\resizebox{\columnwidth}{!}{
\begin{tabular}{| l | c | c | c | c | } 
 \hline
 Metric & \static & \mmf & \fastpf & \opt \\
 \hline
 Throughput(/min) & 5.76 & 6.42 & 6.72 & 6.90 \\
 \hline
 Avg cache util. & 0.77 & 0.93 & 0.93 & 0.94 \\
 \hline
 Hit ratio & 0.40 & 0.50 & 0.49 & 0.51 \\
 \hline
 Fairness index & 1.00 & 1.00 & 0.99 & 0.97 \\
 \hline
\end{tabular}
}
\caption{Performance of algorithms on setup {\em low}}

\label{tab:arrival1}
\end{table}

\begin{table}[h!]
\centering
\resizebox{\columnwidth}{!}{
\begin{tabular}{| l | c | c | c | c | } 
 \hline
 Metric & \static & \mmf & \fastpf & \opt \\
 \hline
 Throughput(/min) & 6.12 & 6.78 & 6.96 & 6.96 \\
 \hline
 Avg cache util. & 0.72 & 0.90 & 0.89 & 0.90 \\
 \hline
 Hit ratio & 0.44 & 0.49 & 0.49 & 0.56 \\
 \hline
 Fairness index & 1.00 & 1.00 & 0.98 & 0.87 \\
 \hline
\end{tabular}
}
\caption{Performance of algorithms on setup {\em mid}}

\label{tab:arrival2}
\end{table}

\begin{table}[h!]
\centering
\resizebox{\columnwidth}{!}{
\begin{tabular}{| l | c | c | c | c | } 
 \hline
 Metric & \static & \mmf & \fastpf & \opt \\
 \hline
 Throughput(/min) & 5.52 & 6.12 & 6.30 & 6.54 \\
 \hline
 Avg cache util. & 0.69 & 0.90 & 0.91 & 0.91 \\
 \hline
 Hit ratio & 0.39 & 0.48 & 0.48 & 0.51 \\
 \hline
 Fairness index & 1.00 & 1.00 & 1.00 & 0.89 \\
 \hline
\end{tabular}
}
\caption{Performance of algorithms on setup {\em high}}

\label{tab:arrival3}
\end{table}

\FloatBarrier

\subsection{Results of experiments on effect of number of tenants}

\begin{table}[h!]
\centering
\resizebox{\columnwidth}{!}{
\begin{tabular}{| l | c | c | c | c | } 
 \hline
 Metric & \static & \mmf & \fastpf & \opt \\
 \hline
 Throughput(/min) & 7.00 & 10.00 & 9.70 & 10.40 \\
 \hline
 Avg cache util. & 0.67 & 0.93 & 0.93 & 0.97 \\
 \hline
 Hit ratio & 0.50 & 0.68 & 0.68 & 0.68 \\
 \hline
 Fairness index & 1.00 & 0.98 & 1.00 & 1.00 \\
 \hline
\end{tabular}
}
\caption{Performance of algorithms on setup with 2 tenants}

\label{tab:tenants1}
\end{table}

\begin{table}[H]
\centering
\resizebox{\columnwidth}{!}{
\begin{tabular}{| l | c | c | c | c | } 
 \hline
 Metric & \static & \mmf & \fastpf & \opt \\
 \hline
 Throughput(/min) & 6.00 & 9.40 & 9.40 & 10.10 \\
 \hline
 Avg cache util. & 0.34 & 0.87 & 0.86 & 0.88 \\
 \hline
 Hit ratio & 0.42 & 0.67 & 0.67 & 0.68 \\
 \hline
 Fairness index & 1.00 & 0.98 & 0.94 & 0.84 \\
 \hline
\end{tabular}
}
\caption{Performance of algorithms on setup with 4 tenants}

\label{tab:tenants2}
\end{table}

\begin{table}[H]
\centering
\resizebox{\columnwidth}{!}{
\begin{tabular}{| l | c | c | c | c | } 
 \hline
 Metric & \static & \mmf & \fastpf & \opt \\
 \hline
 Throughput(/min) & 5.34 & 8.34 & 8.22 & 9.18 \\
 \hline
 Avg cache util. & 0.07 & 0.82 & 0.82 & 0.87 \\
 \hline
 Hit ratio & 0.26 & 0.65 & 0.65 & 0.68 \\
 \hline
 Fairness index & 1.00 & 0.94 & 0.91 & 0.78 \\
 \hline
\end{tabular}
}
\caption{Performance of algorithms on setup with 8 tenants}

\label{tab:tenants3}
\end{table}

\FloatBarrier

\end{document}